\documentclass[manuscript,nonacm]{acmart}

\usepackage{mathtools}

\title[Symmetric Bounded Indistinguishability]{Symmetric Bounded Indistinguishability: Hypergeometric Smoothing and Hahn Polynomials}

\author{Christopher Williamson}
\affiliation{
  \institution{Independent researcher}
  \city{Hong Kong}
  \country{Hong Kong, China}
}
\email{cw@williamsonchris.com}

\begin{abstract}
A pair of probability distributions over $\{0,1\}^n$ is said to be $(k,\delta)$-wise indistinguishable if all of the size $k$ marginals are within statistical distance at most $\delta$. Previous works introduce this concept and study how far apart $t$-wise marginals can be under an assumption of $(k,\delta)$-wise indistinguishability. We consider symmetric distributions and obtain a new upper bound that unifies and improves previous bounds and applies across a wider range of parameters. In particular, prior works failed to rule out the existence of constants $0<c<c'<1$ so that there is a pair of $(cn,0)$-wise indistinguishable distributions where the $c'n$-wise marginals have statistical distance $\Omega(1)$. Our upper bound shows that the $c'n$-wise marginals must be exponentially close for all $c$ and $c'$. Our upper bound is accompanied with a nearly matching lower bound when $\delta=0$ and also yields new results in the case $\delta>0$ or when $t/n$ tends to 1. Our approach is to exploit the behaviour of the orthogonal Hahn polynomials under hypergeometric sampling and marginalisation operations. As a secondary contribution, we provide nearly matching upper and lower bounds on the maximum possible distance between a pair of $(k,\delta)$-wise indistinguishable distributions and the nearest pair of $(k,0)$-wise indistinguishable distributions.
\end{abstract}

\ccsdesc[500]{Theory of computation~Pseudorandomness and derandomization}
\ccsdesc[300]{Security and privacy~Mathematical foundations of cryptography}
\ccsdesc[300]{Mathematics of computing~Discrete mathematics}

\keywords{bounded indistinguishability, symmetric distributions, Hahn polynomials, hypergeometric sampling, approximate k-wise indistinguishability}

\acmJournal{TOCT}
\begin{document}
\maketitle

\section{Introduction}
We consider pairs of distributions ($\mu,\nu$) over $\{0,1\}^n$.
The distributions $\mu$ and $\nu$ are said to be $(k,\delta)$-wise
indistinguishable if for any subset $S \subseteq [n]$ of size at most $k$,
the marginal distributions $\mu_S$ and $\nu_S$ over indices in $S$ are
of statistical distance at most $\delta$. The distributions have $t$-wise statistical distance $\varepsilon$ if there exists a set $S \subseteq [n]$ of indices of size $t$ and a
statistical test $T:\{0,1\}^{t} \to \{0,1\}$ such that
\[
\left|
\mathbb E_{X \sim \mu}[T(X_S)]
-
\mathbb E_{Y \sim \nu}[T(Y_S)]
\right|
\geq \varepsilon,
\]
where $X_S$ is the restriction of random variable $X$ to the bits located
at the indices in $S$. The distributions are symmetric if $\mu$ and
$\nu$ are invariant under permutation (see Section~\ref{s:symmetry}); for
such distributions the size of $S$ is relevant but not
the choice of indices.

\paragraph{Cryptographic motivation.}

Work of Bogdanov et al.~\cite{bogdanov2016bounded, bogdanov2017approximate} considered this notion of
indistinguishability as a way to capture cryptographic secret sharing
schemes in a minimal setting. Their observation was that a single bit secret
can be shared by sampling $n$ bits from $\mu$ or from $\nu$, depending on
the secret: $(k,0)$-wise indistinguishability of the distributions provides
a perfect information theoretic security guarantee that any size $\leq k$ coalition of colluding parties
learns nothing about the secret from their joint shares. The secret
reconstruction function for the scheme is a test $T$ applied over the shares
of sufficiently many (possibly all) parties. Their work proves the existence of distributions over $\Sigma^n$ where $\Sigma$ consists of bit strings of length polynomial in $n$. In this work we only consider distributions over $\{0,1\}^n$.

The special case of $T=\mathrm{OR}$ captures the notion of visual
cryptography, introduced by Naor and Shamir~\cite{naor1994visual} in 1994. Their work
considered $(k,0)$-wise indistinguishable distributions and attempted to
maximise the reconstruction advantage of $\mathrm{OR}$ taken over $k+1$
bits. They gave a tight bound for all $n$ with $k=n-1$ and also derived a non-tight result for general $k$ and $n$. Krause and
Simon~\cite{krause2003determining} gave improved upper and lower bounds for general $k$ and $n$, but these bounds are exponentially far apart when $k=\Omega(n)$. In
this work, we consider the statistical distance between the distributions, which must capture the power of any given statistical test. 

\paragraph{Prior work.}
Huang and Viola~\cite{huang2022approximate} considered
the extent to which symmetric Boolean functions are able to distinguish between $(k,\delta)$-wise indistinguishable distributions. Because symmetric functions form a basis for any optimal statistical test over symmetric distributions, such results have the potential to be translated into an upper bound on $t$-wise statistical distance. However, their result only considers tests that observe all $n$ bits and does not provide a non-trivial upper bound on the distinguishing power of certain symmetric tests; thus they do not provide upper bounds on the statistical distance of marginal distributions. Huang and Viola also prove that if a pair of (not necessarily symmetric) distributions are \((k,\delta)\)-indistinguishable, then they are
\(O\!\left(e^k n^{k/2}\delta\right)\)-close to a pair of distributions that are \(k\)-wise indistinguishable.

Bogdanov et al.~\cite{bogdanov2019approximate} considered
the extent to which symmetric $(k,0)$-wise indistinguishable distributions must
have statistically close $t$-wise marginals for $t>k$. In particular, they show that if $\mu$ and $\nu$ are symmetric over $n$-bit
strings and $(k,0)$-wise indistinguishable, then the statistical
distance between $t$-wise marginals is at most
\begin{equation}
\label{eq:BMTW19}
    O(k^{3/2}) \cdot e^{-k^2/1156t}
\end{equation}
for all $k<t\leq n/64$.

In the sharp reconstruction setting where $t=k+1$, the above bound gives
\[
2^{-\Omega(t)}.
\]
This cannot possibly be tight for some $t>n/64$, for the simple reason that the parity function over all $n$ bits can distinguish perfectly between a pair of distributions that are $(n-1)$-wise indistinguishable (take the two distributions to be uniform but conditioned on the global parity being equal to 0 or 1). This observation motivated the work~\cite{williamson2021sharp}, which considers the sharp threshold case in which $t=k+1$. If for $x\in[0,1]$ we define the function 
\[
I(x):=\frac12\left((1+x)\log_2(1+x)+(1-x)\log_2(1-x)\right),
\]
where we take $I(1):=1$, then~\cite{williamson2021sharp} shows that if a pair of distributions are $(k,0)$-wise indistinguishable, then the $(k+1)$-wise marginals must be within statistical distance:
\begin{equation}
\label{eq:Wil22}
    n^{O(1)}\cdot 2^{-k+n\cdot I(k/n)},
\end{equation}
and this is sharp up to the leading polynomial factor.

These results leave basic questions unresolved; in particular, it is left open whether there exist constants $0<c<c'<1$ and a pair of symmetric distributions $\mu,\nu$ over $\{0,1\}^n$ that are $(cn,0)$-wise indistinguishable yet $c'n$-wise reconstructible with advantage $e^{-o(n)}$ or even $\Omega(1)$. 

\paragraph{Our contribution.}
We derive a new upper bound that generalises the results of~\cite{bogdanov2019approximate,williamson2021sharp}. Relative to~\cite{bogdanov2019approximate}, the new bound is sharper and allows $k$ and $t$ to be larger than $n/64$. Relative to~\cite{williamson2021sharp}, our new bound does not require that $t=k+1$. The new bound also provides results in the case $\delta>0$.  
\begin{theorem}
\label{thm:main-exact-H-form}
Let \(0\le s\le k<t< n\), and let \(\mu,\nu\) be symmetric
\((k,\delta)\)-wise indistinguishable distributions on \(\{0,1\}^n\).
Then, the statistical distance between all $t$-wise marginals is at most:
\[
    n^{O(1)}\cdot \frac{1}{2^{t\cdot I(s/t)}}\left( 2^{n\cdot I(s/n)}+\delta \cdot 2^{k\cdot I(s/k)}\right),
\]
\end{theorem}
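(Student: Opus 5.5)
Because $\mu,\nu$ are symmetric, each is determined by its Hamming-weight distribution, say $p,q$ on $\{0,\dots,n\}$. The $m$-wise marginal of a symmetric distribution is again symmetric, and its weight distribution is obtained from $p$ by hypergeometric sampling: we keep $m$ of the $n$ coordinates uniformly at random. Write $H_{n\to m}$ for this Markov operator. Its weight-$w$ output has probability
\[
\sum_j p_j\binom{j}{w}\binom{n-j}{m-w}\Big/\binom{n}{m}.
\]
The $t$-wise statistical distance equals $\tfrac12\|H_{n\to t}(p-q)\|_1$. The hypothesis says $\|H_{n\to k}(p-q)\|_1\le 2\delta$, and for $\delta=0$ it says that $p-q$ is orthogonal to all polynomials of degree $\le k$ in the weight.

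I will expand $f=p-q$ in the Hahn polynomials $Q_0,\dots,Q_n$, which are orthogonal for the binomial weight on $\{0,\dots,n\}$. The key structural fact is that hypergeometric sampling acts diagonally in this basis. $H_{n\to m}$ maps the degree-$d$ Hahn component at level $n$ to the degree-$d$ Hahn component at level $m$, multiplied by an explicit ratio, and it kills all components with $d>m$. I will verify this by a generating-function or Krawtchouk/Hahn duality computation.

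The proof then splits $f$ by degree at the cutoff $s$. Write
\[
f=f_{\le s}+f_{>s}.
\]

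\textbf{High-degree part.} Here $d>s$. For $\delta=0$ the components of degree $\le k$ vanish, and in general they are controlled by $\delta$. For the high-degree components, I bound the contribution to $\|H_{n\to t}f\|_1$ by Cauchy–Schwarz at level $t$ together with $\|f\|_1\le 2$. The outcome should be that each degree-$d$ component contributes roughly
\[
\frac{\binom{n}{d}\text{-type normalisation at level } n}{\binom{t}{d}\text{-type normalisation at level } t}.
\]
After the standard entropy estimate $\binom{m}{(m-d)/2}\approx 2^{m-mI(d/m)}$-style asymptotics, this gives the term $2^{nI(s/n)-tI(s/t)}$, up to $n^{O(1)}$. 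The monotonicity of $I$-type expressions in $d$ lets the worst degree be $d\approx s$.

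\textbf{Low-degree part.} Here $d\le s\le k$. I factor
\[
H_{n\to t}=H_{k\to t}^{\dagger}\text{-type lift}\circ H_{n\to k}.
\]
More precisely, on the span of degrees $\le k$ the level-$t$ Hahn data is recovered from the level-$k$ data by dividing by the diagonal ratios. So $\|(H_{n\to t}f)_{\le s}\|_1$ is at most $\|H_{n\to k}f\|_1\le2\delta$ times the operator norm of this degree-$\le s$ lift from level $k$ to level $t$. Estimating that norm through the same Hahn normalisations gives $2^{kI(s/k)-tI(s/t)}$ times $n^{O(1)}$, which is the $\delta$ term.

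\textbf{Main obstacle.} The hard step is getting sharp exponent asymptotics for the Hahn norms and the $L^1\leftrightarrow L^2$ conversions. A naive Cauchy–Schwarz conversion loses a $2^{m}$-type factor. To avoid this, I would do the $L^2$ estimates with respect to the binomial weight at each level, which is exactly the Hahn orthogonality measure. I would then pay only the $\sqrt{2^m/\binom{m}{j}}$ discrepancy near the central weights, absorbing the tails separately so the loss stays polynomial. If that fails, the fallback is to bound the max of each Hahn polynomial on $\{0,\dots,m\}$ via its explicit hypergeometric ${}_3F_2$ form and sum over degrees, which loses at most $n^{O(1)}$.
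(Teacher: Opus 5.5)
\textbf{There is a genuine gap.} It sits where you put your ``main obstacle'': the choice of weight. Polynomials orthogonal for the binomial weight $w_m(j)=\binom mj2^{-m}$ are Krawtchouk polynomials, not Hahn polynomials. Your structural fact does hold for them: hypergeometric sampling diagonalises them, since this is just the Fourier level decomposition on the cube. But the degree-$d$ damping factor is $\sqrt{\binom td/\binom nd}$. That has binary-entropy asymptotics, not the $2^{nI(d/n)-tI(d/t)}$ you assert; the estimate you quote for $\binom m{(m-d)/2}$ belongs to a different family. Worse, the orthonormal Krawtchouk polynomial has sup norm at least $\sqrt{\binom md}$, attained at $j=0$. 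So converting $\|p-q\|_1\le 2$ or $\|H_{n\to k}(p-q)\|_1\le2\delta$ into $L^2(w_m)$ coefficient bounds costs exponential factors, because $\mu,\nu$ may live entirely on weights where $w_n$ is exponentially small.

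``Absorbing the tails separately'' cannot fix this. Suppose the conversions cost only $n^{O(1)}$. Then your argument with $\delta=0$ would give $\Delta_t\le n^{O(1)}\sqrt{\binom t{k+1}/\binom n{k+1}}\le n^{O(1)}2^{-(k+1)/2}$ for $t=n/2$. For $k\approx\sqrt n$ this is $2^{-\Omega(\sqrt n)}$. However, the paper constructs symmetric $k$-wise indistinguishable pairs with $\Delta_t\ge\lambda_{n,t,k+1}/\sqrt{t+1}=\Omega(1/\sqrt n)$ in exactly that regime. Your fallback fails for the same sup-norm reason.

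The missing idea is to use the Hahn polynomials $Q_r^{(m)}=Q_r(\,\cdot\,;0,0,m)$, which are orthogonal for the \emph{uniform} weight on $\{0,\dots,m\}$.
\begin{itemize}
\item Their coefficients in the basis $x^{\underline\ell}/m^{\underline\ell}$ do not depend on $m$. Hence the factorial-moment identity $\mathbb E[A^{\underline\ell}]=\frac{t^{\underline\ell}}{n^{\underline\ell}}j^{\underline\ell}$ gives the diagonal action, with damping $\lambda_{n,t,r}=\sqrt{H_{n,r}/H_{t,r}}$.
\item This damping is decreasing in $r$, and by Stirling it equals $2^{nI(r/n)-tI(r/t)+O(\log n)}$.
\item Uniform orthonormality gives $\|\phi_r^{(m)}\|_\infty\le\sqrt{m+1}$, and $\|\sigma\|_2\le\|\sigma\|_1$ holds trivially. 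So every $L^1/L^2/L^\infty$ conversion costs only $\sqrt{m+1}$, and there are no tails to handle.
\end{itemize}

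With this basis your degree-$s$ split goes through. The high part contributes at most $\sqrt{n+1}\,\lambda_{n,t,s+1}\le\sqrt{n+1}\,\lambda_{n,t,s}$ to $\Delta_t$. The low part is the level-$k$ data lifted by the diagonal factors $\lambda_{n,t,r}/\lambda_{n,k,r}=\lambda_{t,k,r}^{-1}$, contributing at most $\delta\sqrt{k+1}\,\lambda_{t,k,s}^{-1}$. This is the measure-side dual of the paper's proof. The paper instead works with a test $f$ on $\{0,\dots,t\}$ and builds $h$ on $\{0,\dots,k\}$ whose smoothing $T_{n,k}h$ is the degree-$\le s$ Hahn truncation of $T_{n,t}f$.
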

The parameter $s$ can be freely chosen to minimise the upper bound; its optimal setting depends on $\delta$. When $\delta=0$, we set $s=k$ and arrive at a corollary:
\begin{corollary}
In the setting of Theorem~\ref{thm:main-exact-H-form}, with $\delta=0$, we have that the statistical distance between the $t$-wise marginals is at most
\[
    n^{O(1)}\cdot 2^{-t\cdot I(k/t)+n\cdot I(k/n)},
\]
\end{corollary}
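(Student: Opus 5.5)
The corollary is the special case $\delta=0$, $s=k$ of Theorem~\ref{thm:main-exact-H-form}, so the plan is simply to substitute and simplify.

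First I check that the choice of parameters is allowed. The theorem requires $0\le s\le k<t<n$, and taking $s=k$ meets this whenever $k<t<n$, which is the standing hypothesis of the corollary.

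Next I evaluate the bound. With $\delta=0$ the second term $\delta\cdot 2^{k\cdot I(s/k)}$ vanishes; no information about $2^{k\cdot I(1)}=2^k$ is needed, since it is multiplied by zero. The theorem then bounds the statistical distance between the $t$-wise marginals by
\[
n^{O(1)}\cdot \frac{2^{n\cdot I(k/n)}}{2^{t\cdot I(k/t)}} \;=\; n^{O(1)}\cdot 2^{-t\cdot I(k/t)+n\cdot I(k/n)},
\]
which is exactly the claimed expression.

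Finally I check that both arguments of $I$ are valid. Since $k<t<n$, we have $k/t$ and $k/n$ in $[0,1)$, so $I$ is evaluated at finite, well-defined points. In particular the convention $I(1):=1$ is never invoked.

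There is no real obstacle here. The only point needing care is that the $n^{O(1)}$ factor from the theorem is uniform in $s$, and hence in $k$ and $t$, so the specialisation $s=k$ does not change its form. The substantive work (hypergeometric smoothing and Hahn polynomials) lies entirely in Theorem~\ref{thm:main-exact-H-form} itself.
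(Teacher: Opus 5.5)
Your proposal is correct and matches the paper's argument exactly: the paper obtains this corollary by setting $s=k$ and $\delta=0$ in Theorem~\ref{thm:main-exact-H-form}, which removes the second term and leaves $n^{O(1)}\cdot 2^{-t\cdot I(k/t)+n\cdot I(k/n)}$. Your side checks are also sound: $s=k$ is admissible, the arguments of $I$ lie in $[0,1)$, and the polynomial factor is uniform in $s$.
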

which is easy to compare against the~\cite{williamson2021sharp} bound (\ref{eq:Wil22}). This corollary comes with a nearly matching lower bound, stated and proved in Section~\ref{s:mainLBsection}.

A useful restatement of Theorem~\ref{thm:main-exact-H-form}, justified in Section~\ref{s:subsubsectionCor} is:
\begin{corollary}
\label{cor:main-cor-delta}
In the setting of Theorem~\ref{thm:main-exact-H-form}, the statistical distance between the $t$-wise marginals is at most
\begin{equation}
\label{eq:working-approx-bound}
    \exp\!\left(
        \frac12\log(n+1)
        -
        \frac{(n-t)s^2}{6nt}
    \right)
    +
    \delta\,\exp\!\left(\frac12\log(k+1)+
        \frac{(t-k)s(s+1)}
             {2t(k-s+1)}
    \right).
\end{equation}
\end{corollary}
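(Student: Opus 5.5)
The plan is to derive Corollary~\ref{cor:main-cor-delta} from Theorem~\ref{thm:main-exact-H-form} by two elementary analytic estimates on the exponents. Everything runs through the power series of $I$. In natural units write $g(x):=\tfrac12\bigl((1+x)\ln(1+x)+(1-x)\ln(1-x)\bigr)$, so that $2^{mI(s/m)}=e^{m g(s/m)}$. Expanding the logarithms gives
\[
g(x)=\sum_{j\ge1}\frac{x^{2j}}{2j(2j-1)},\qquad
m\,g(s/m)=\sum_{j\ge1}\frac{s^{2j}}{2j(2j-1)\,m^{2j-1}},
\]
so every term is nonnegative and decreasing in $m$.

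\emph{Polynomial prefactor.} I would not use Theorem~\ref{thm:main-exact-H-form} with an unspecified $n^{O(1)}$. Instead I would reopen its proof, where the bound should arise as ratios of binomial coefficients of the form $2^m/\binom{m}{(m+s)/2}$. Using $I(x)=1-H\bigl(\tfrac{1+x}{2}\bigr)$ together with the standard estimates
\[
\frac{1}{m+1}\,2^{mH(p)}\le\binom{m}{pm}\le 2^{mH(p)},
\]
applied once to the numerator, each term picks up at most a factor $\sqrt{m+1}$ (or $m+1$). This is the step I expect to be the main obstacle. It requires checking that the exact form from the proof really has only one uncancelled binomial ratio per term, giving $\sqrt{n+1}$ for the first term and $\sqrt{k+1}$ for the second. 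It also requires checking that any remaining polynomial losses can be absorbed. For the first term the absorption uses the slack between the constants $1/2$ and $1/6$ below. For the second term it uses the $+1$ corrections in $s(s+1)/(k-s+1)$, which come from working with the discrete binomials rather than their entropy approximations. If the exact form has extra factors, I would track them and trade them against this slack.

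\emph{First term.} We need $t g(s/t)-n g(s/n)\ge \frac{(n-t)s^2}{6nt}$. Every term of the series difference is nonnegative because $t<n$. Keeping only $j=1$ gives
\[
t g(s/t)-n g(s/n)\ \ge\ \frac{s^2}{2}\Bigl(\frac1t-\frac1n\Bigr)=\frac{(n-t)s^2}{2nt},
\]
which is stronger than required.

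\emph{Second term.} We need $k g(s/k)-t g(s/t)\le \frac{(t-k)s(s+1)}{2t(k-s+1)}$. For each $j$, the mean value bound $t^{2j-1}-k^{2j-1}\le(2j-1)(t-k)t^{2j-2}$ gives
\[
k^{1-2j}-t^{1-2j}\le\frac{(2j-1)(t-k)}{t\,k^{2j-1}}.
\]
Summing the series then yields
\[
k g(s/k)-t g(s/t)\le\frac{t-k}{t}\sum_{j\ge1}\frac{s^{2j}}{2j\,k^{2j-1}}
\le\frac{(t-k)s^2}{2tk}\cdot\frac{1}{1-s^2/k^2}
\le\frac{(t-k)s^2}{2t(k-s)}.
\]
This blows up at $s=k$, whereas the target $s(s+1)/(k-s+1)$ stays finite there. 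So in this regime I would redo the estimate on the exact binomial ratio $\binom{k}{(k+s)/2}^{-1}2^k$ versus its $t$-analogue. I would write their quotient as a telescoping product over $m=k,\dots,t-1$ of factors of the form $\frac{(m+1)}{(m+1+s)/2}$-type ratios. Each factor is bounded by $\exp\bigl(\frac{s(s+1)}{2m(m-s+1)}\bigr)$ or similar, and summing over $m$ with $m\ge k$ gives at most $\frac{(t-k)s(s+1)}{2t(k-s+1)}$.

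Combining the two terms, with the prefactors $\sqrt{n+1}$ and $\sqrt{k+1}$ written as $\exp(\tfrac12\log(n+1))$ and $\exp(\tfrac12\log(k+1))$, gives \eqref{eq:working-approx-bound}.
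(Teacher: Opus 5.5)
\textbf{There is a genuine gap.} Your argument never bounds the quantities that actually appear, and the entropy route cannot deliver the explicit constants.

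Theorem~\ref{thm:main-exact-H-form} carries an unspecified $n^{O(1)}$, so it cannot yield the prefactors $\sqrt{n+1}$ and $\sqrt{k+1}$. The ``exact form'' you plan to recover by reopening the proof, namely ratios $2^m/\binom{m}{(m+s)/2}$, is not what the proof produces. The proof gives Theorem~\ref{thm:mainmain}, $\Delta_t(\mu,\nu)\le\sqrt{n+1}\,\lambda_{n,t,s+1}+\delta\sqrt{k+1}\,\lambda_{t,k,s}^{-1}$. So the corollary is exactly the pair of inequalities $\lambda_{n,t,s+1}\le\exp(-\frac{(n-t)s^2}{6nt})$ and $\lambda_{t,k,s}^{-1}\le\exp(\frac{(t-k)s(s+1)}{2t(k-s+1)})$, and no polynomial loss is allowed.

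Your plan to absorb Stirling losses into slack fails for the $\delta$-term, which cannot be dismissed as trivially large because $\delta$ may be tiny. The target exponent minus your series bound equals $\frac{(t-k)s(k-2s)}{2t(k-s)(k-s+1)}$. This is negative for $s>k/2$. For $t=k+1$ and $s\le k/2$ it is at most $\frac{1}{k+1}$, so it cannot absorb even a factor of $2$, let alone an unspecified $n^{O(1)}$.

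The telescoping fallback is also incorrect as written. First, $\binom{m}{(m+s)/2}$ is undefined for every other $m$. Second, summing your per-step exponent $\frac{s(s+1)}{2m(m-s+1)}$ over $m=k,\dots,t-1$ already exceeds the target when $t=k+1$, since $\frac1k>\frac1{k+1}$.

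\textbf{The missing idea} is to skip entropy altogether and bound $\lambda$ directly from its product formula
\[
\lambda_{n,t,r}^2=\prod_{j=0}^{r-1}\frac{(t-j)(n+j+2)}{(n-j)(t+j+2)}.
\]
\begin{itemize}
\item Each factor is $1-\varepsilon_j$ with $\varepsilon_j=\frac{2(n-t)(j+1)}{(n-j)(t+j+2)}\ge\frac{2(n-t)(j+1)}{n(t+r+1)}$. Using $1-x\le e^{-x}$ and $\sum_{j<r}(j+1)=\frac{r(r+1)}{2}$ gives Lemma~\ref{lem:nonasymptotic-damping}.
\item The reciprocal factors are $1+\frac{2(n-t)(j+1)}{(t-j)(n+j+2)}\le 1+\frac{2(n-t)(j+1)}{n(t-r+1)}$. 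Using $\log(1+x)\le x$ gives Lemma~\ref{lem:lambda-inverse-bound}.
\end{itemize}
The corollary then follows from Theorem~\ref{thm:mainmain}. For the first term, take $r=s+1$ and use $(s+1)(s+2)\ge s^2$ and $t+s+2\le 2t+1\le 3t$, which hold because $s\le k<t$. For the second term, apply the inverse bound with $(n,t,r)$ replaced by $(t,k,s)$. Note that the $+1$'s in $s(s+1)/(k-s+1)$ come from $\sum_{j<r}(j+1)$ and $t-j\ge t-r+1$, not from discreteness of binomials.

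If you want to keep the telescoping idea, apply it to $\lambda_{t,k,s}^{-1}=\prod_{m=k}^{t-1}\lambda_{m+1,m,s}^{-1}$ with per-step exponent $\frac{s(s+1)}{2(m+1)(m-s+1)}$. An induction on $t$ shows the sum is at most the target, but this is a detour around the direct bound.
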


Corollary~\ref{cor:main-cor-delta} is useful for recovering bound (\ref{eq:BMTW19}) that appears in~\cite{bogdanov2019approximate}: by setting $\delta=0$ and $s=k$ we immediately recover the fact that when $t$ is at most some fixed constant fraction of $n$, the upper bound is governed by the $-k^2/t$ term in the exponent. Going beyond~\cite{bogdanov2019approximate}, in the case that $k$ is at least some fixed constant fraction of $n$ and $t/n\rightarrow 1$, we see that the behaviour is governed by the exponent $-(n-t)$.

Corollary~\ref{cor:main-cor-delta} is also useful in the setting that $\delta>0$. For a given $k$ and $t$, we fix a statistical distance $\varepsilon$ and then are interested in maximising $\delta$ so that symmetric $(k,\delta)$-wise indistinguishable distributions must have $t$-wise statistical distance at most $\varepsilon$. The right choices of $s$ and Corollary~\ref{cor:main-cor-delta} yield upper bounds in various regimes; in Section~\ref{s:applications} we write down explicitly the cases in which $t$ is at most a fixed constant fraction of $n$ and in which $t$ approaches $n$. These corollaries extend results in~\cite{huang2022approximate}, which only consider reconstruction by all $n$ bits and require $\delta$ to be smaller than our bounds. To gain intuition for now, we state one representative result and defer full results to Section~\ref{s:applications}.
\begin{corollary}
\label{cor:main_second_cor}
Set any $t<0.99n$ and $k\geq C\sqrt{t\log(n/\varepsilon)}$ for some sufficiently large constant $C$. If \(\mu,\nu\) are symmetric \((k,\delta)\)-wise indistinguishable distributions over \(\{0,1\}^n\), and if
\[
\delta
    \le
    \frac{\varepsilon}{2\sqrt{k+1}}
    \exp\!\left(
        -O\!\left(
            \frac{(t-k)\log(n/\varepsilon)}{k}
        \right)
    \right)
\]
then the $t$-wise statistical distance is at most $\varepsilon$.
\end{corollary}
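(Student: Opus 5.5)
The plan is to derive Corollary~\ref{cor:main_second_cor} directly from Corollary~\ref{cor:main-cor-delta}. We choose the free parameter $s$ so that the first (``$\delta$-free'') term of (\ref{eq:working-approx-bound}) is at most $\varepsilon/2$. Given that choice, the hypothesis on $\delta$ should force the second term to be at most $\varepsilon/2$ as well.

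\textbf{First term.} Because $t<0.99n$, we have $(n-t)/n\ge 1/100$, so
\[
\frac12\log(n+1)-\frac{(n-t)s^2}{6nt}\le \frac12\log(n+1)-\frac{s^2}{600\,t}.
\]
We take $s=\left\lceil \sqrt{C_0\, t\log(n/\varepsilon)}\right\rceil$ for a suitable absolute constant $C_0$ (for instance $C_0=1200$). Then
\[
\frac{s^2}{600t}\ge 2\log(n/\varepsilon)\ge \tfrac12\log(n+1)+\log(2/\varepsilon),
\]
where we assume $\varepsilon<1$ and $n\ge 2$; otherwise the statement is trivial. This makes the first term at most $\varepsilon/2$. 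The requirement $s\le k$ in Theorem~\ref{thm:main-exact-H-form} holds once $C\ge 2\sqrt{C_0}$, since $k\ge C\sqrt{t\log(n/\varepsilon)}$. This is exactly where the hypothesis on $k$ is used.

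\textbf{Second term.} The prefactor $\exp(\frac12\log(k+1))=\sqrt{k+1}$ is absorbed by the $\frac{\varepsilon}{2\sqrt{k+1}}$ in the hypothesis on $\delta$. It remains to bound the exponent
\[
\frac{(t-k)s(s+1)}{2t(k-s+1)}
\]
by $O\bigl((t-k)\log(n/\varepsilon)/k\bigr)$.

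\textbf{Main obstacle.} This estimate requires keeping $k-s+1$ comparable to $k$. With $C\ge 2\sqrt{C_0}$ we get $s\le k/2+1$, hence $k-s+1\ge k/2$. Also $s(s+1)\le 2s^2\le 4C_0\, t\log(n/\varepsilon)$. Together these give
\[
\frac{(t-k)s(s+1)}{2t(k-s+1)}\le \frac{4C_0(t-k)\log(n/\varepsilon)}{k},
\]
using $s\ge1$ and rounding slack absorbed into constants. Consequently, if $\delta$ satisfies the stated bound with the implied constant in the $O(\cdot)$ at least $4C_0$, the second term is at most $\varepsilon/2$. Summing the two terms bounds the $t$-wise statistical distance by $\varepsilon$.

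The only delicate points are constant bookkeeping: the ceiling in $s$ and the requirement that $C$ be large enough that both $s\le k$ and $s\le k/2+1$ hold. We handle these by enlarging $C$.
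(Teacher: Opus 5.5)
Your proposal is correct and follows essentially the paper's route: the paper also applies Corollary~\ref{cor:main-cor-delta}, packaged as Corollary~\ref{cor:finite-epsilon-criterion}. There the exact choice $s=\lceil\sqrt{\frac{6nt}{n-t}\log(2\sqrt{n+1}/\varepsilon)}\rceil$ (which is $O(\sqrt{t\log(n/\varepsilon)})$ when $t<0.99n$) makes the $\delta$-free term at most $\varepsilon/2$, and $k\ge C\sqrt{t\log(n/\varepsilon)}$ then gives $s\le k$ and bounds the second exponent by $O((t-k)\log(n/\varepsilon)/k)$. Your explicit check that $k-s+1\ge k/2$ is a step the paper leaves implicit, so your constant bookkeeping is, if anything, more complete.
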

Finally, we consider the distance between pairs of symmetric $(k,\delta)$-wise indistinguishable distributions and pairs of $(k,0)$-wise indistinguishable distributions. Our result is to improve the upper bound of Huang and Viola under the additional premise of symmetry and to prove tightness of the bound. 
For symmetric distributions \(\mu,\nu\) on \(\{0,1\}^n\), we define
\[
    \operatorname{Rep}_{n,k}(\mu,\nu)
    :=
    \inf_{\mu',\nu'}
    \max\{\Delta(\mu,\mu'),\Delta(\nu,\nu')\},
\]
where the infimum is over symmetric distributions \(\mu',\nu'\) on
\(\{0,1\}^n\) which are \((k,0)\)-wise indistinguishable.

\begin{theorem}
\label{thm:symmetric-repair-upper}
Let \(0\le k<n\), and let \(\mu,\nu\) be symmetric
\((k,\delta)\)-wise indistinguishable distributions on \(\{0,1\}^n\).
Then
\[
    \operatorname{Rep}_{n,k}(\mu,\nu)
    \le
    \delta\,2^{\,k-n\cdot I(k/n)+O(\log n)}.
\]
\end{theorem}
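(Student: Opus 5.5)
We reduce to the weight distributions and repair the difference in the Hahn basis. A symmetric distribution $\mu$ on $\{0,1\}^n$ is determined by its weight distribution $p_\mu$ on $\{0,\dots,n\}$. Symmetric $\mu,\nu$ are $(k,0)$-wise indistinguishable exactly when $f:=p_\mu-p_\nu$ is orthogonal, in $\ell^2$ over $\{0,\dots,n\}$, to every polynomial of degree $\le k$ in the weight. Equivalently, $f$ divided by the binomial weight $\binom{n}{w}$ must be orthogonal, in $\ell^2(\binom{n}{w})$, to the Hahn (Krawtchouk-type) polynomials $h_0,\dots,h_k$ that are orthogonal under the binomial weight.

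The plan is to write $f=\binom{n}{w}\sum_j c_j h_j(w)$. We then subtract the low part $g:=\binom{n}{w}\sum_{j\le k}c_jh_j(w)$, splitting it evenly between $\mu$ and $\nu$. Set $\mu'=(\mu-g/2+\eta)/Z$ and $\nu'=(\nu+g/2+\eta)/Z$, where $\eta$ is a nonnegative correction that is added identically to both. Since $\eta$ is common to both, it does not affect indistinguishability, and its only job is to restore nonnegativity. A convenient choice is $\eta=\|g\|_\infty\cdot(\text{point mass spread})$, or simply mixing with the uniform symmetric weight. Then $\Delta(\mu,\mu')\lesssim\|g\|_1$, so it suffices to bound $\|g\|_1\le \delta\,2^{k-nI(k/n)+O(\log n)}$.

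The key step is to bound the coefficients $c_j$ for $j\le k$ using the $k$-wise marginal closeness. We use the marginalisation behaviour of Hahn polynomials that the paper develops for Theorem~\ref{thm:main-exact-H-form}. The $k$-wise marginal of a symmetric distribution is obtained by hypergeometric sampling of the weight, and under this operation the degree-$j$ component maps to a scalar multiple of the degree-$j$ Hahn polynomial on $\{0,\dots,k\}$. Hence $c_j$ is determined by the degree-$j$ coefficient of $p_{\mu_S}-p_{\nu_S}$ divided by that scalar. The marginal difference has $\ell^1$ norm at most $2\delta$, so each of its Hahn coefficients at level $k$ is bounded by $\delta$ times the sup-norm of the normalised $h^{(k)}_j$ on $\{0,\dots,k\}$. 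Transferring back to level $n$ and bounding $\|\binom nw h_j\|_1$ then gives, for each $j\le k$, a factor that is a ratio of binomial-type norms. We will estimate this ratio via the Stirling/entropy asymptotics already used for the $2^{n I(s/n)}$ and $2^{kI(s/k)}$ terms in Theorem~\ref{thm:main-exact-H-form}. The $\delta$-term there is itself such a transfer estimate, with $t$ replaced by $n$ and the roles reversed. Taking $j=k$ as the worst case and summing over $k+1$ values of $j$ costs only an $n^{O(1)}$ factor, and should yield $2^{k-nI(k/n)}$. As a sanity check, at $k=n-1$ the bound becomes $2^{n-1-(n-1)\cdot\ldots}$, which is consistent with parity.

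The main obstacle is this coefficient-transfer estimate: we need the exact normalising constants of the Hahn polynomials under restriction from $n$ to $k$ coordinates, and sharp sup/$\ell^1$ bounds for $h_k$ at level $k$ and level $n$. We will first try to read these off from the machinery behind Theorem~\ref{thm:main-exact-H-form} by setting $s=k$ and $t=n$. If nonnegativity of $\mu'$ fails on the rare weights where $\mu$ is tiny, we absorb the deficit with a common mixture $\eta$ of mass $O(\|g\|_1)$, which keeps both distances within the same bound.
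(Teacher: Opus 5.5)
There is a genuine gap in the choice of correction $g$. Several parts of your plan are sound:
\begin{itemize}
\item the reduction to Hamming-weight laws;
\item the criterion that symmetric $\mu,\nu$ are $(k,0)$-wise indistinguishable iff $p_\mu-p_\nu$ is (unweighted) orthogonal to all polynomials of degree $\le k$;
\item the positivity step, provided you take $\eta=|g|/2$, i.e.\ add $g^-$ to $p_\mu$ and $g^+$ to $p_\nu$ and divide by $1+\|g\|_1/2$. Mixing with the uniform symmetric law can cost far more than $\|g\|_1$.
\end{itemize}

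Projecting $f/\binom nw$ onto degree $\le k$ in $\ell^2(\binom nw)$ gives the Fourier truncation $(\mu-\nu)^{\le k}$ on the cube. The orthogonal polynomials there are Krawtchouk polynomials, not the paper's Hahn polynomials $Q_r^{(n)}$, which are orthogonal for the \emph{uniform} weight on $\{0,\dots,n\}$. Marginalisation is diagonal in your basis too. After normalising, however, it multiplies the level-$j$ coefficient by $\sqrt{\binom kj/\binom nj}$, so undoing it costs $\binom nk^{1/2}$ at $j=k$. The constant $\lambda_{n,k,k}^{-1}$ belongs to the other basis. It is indeed the $\delta$-term of Theorem~\ref{thm:main-exact-H-form} at $t=n$, $s=k$, but it cannot be imported into your basis.

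This loss is real, not an artefact of a crude estimate. Take $\nu$ uniform and $\mu=(1-\delta)\nu+\delta\cdot(\text{point mass at }0^n)$.
\begin{itemize}
\item This pair is $(k,\delta)$-wise indistinguishable.
\item It is trivially repairable at cost $\le\delta$: set $\mu'=\nu'=\mu$.
\item Since $\sum_x(\mu(x)-\nu(x))(-1)^{\sum_{i\in T}x_i}=\delta$ for every nonempty $T$, your correction is $g(w)=\delta\binom nw 2^{-n}\bigl(\kappa^{(n-1)}_k(w-1)-1\bigr)$ for $w\ge 1$. Here $\sum_j\kappa^{(m)}_j(v)z^j=(1-z)^v(1+z)^{m-v}$, and the formula uses $\sum_{j\le k}\kappa^{(n)}_j(w)=\kappa^{(n-1)}_k(w-1)$.
\item For $n$ odd and $k$ even, $|\kappa^{(n-1)}_k(\tfrac{n-1}2)|=\binom{(n-1)/2}{k/2}$. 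So the single weight $w=\tfrac{n+1}{2}$ already gives
\[
\|g\|_1\ \ge\ \frac{\delta}{\sqrt{2(n+1)}}\left(\binom{(n-1)/2}{k/2}-1\right).
\]
\item For $k\approx\sqrt n$ this is $\delta\, n^{\Omega(\sqrt n)}$. By contrast, $\lambda_{n,k,k}^{-1}\le\binom{2k+1}{k}^{1/2}\le 2^{k+1}$.
\end{itemize}
So the inequality you reduce to, $\|g\|_1\le\delta\,2^{k-nI(k/n)+O(\log n)}$, is false. Up to polynomial factors, this route yields only about $\delta\binom nk^{1/2}$.

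The missing idea is to build the correction in the uniform-weight Hahn basis: $\frac{1}{n+1}$ times a degree-$\le k$ polynomial in the weight, which is the minimum unweighted-$\ell^2$ solution of $M_{n\to k}\zeta=y$. Weighting by $\binom nw$ pushes the correction into the middle weights, from which a sharply peaked $k$-marginal can only be produced by massive cancellation. The construction goes as follows.
\begin{enumerate}
\item Expand $(k+1)y=\sum_{r\le k}c_r\phi_r^{(k)}$ and set $\zeta=\frac{1}{n+1}\sum_{r\le k}\lambda_{n,k,r}^{-1}c_r\phi_r^{(n)}$.
\item Lemma~\ref{lem:hahn-measure-intertwining} gives $M_{n\to k}\zeta=y$.
\item Lemma~\ref{lem:lambda-monotone} gives $\bigl(\sum_{r\le k}\lambda_{n,k,r}^{-2}c_r^2\bigr)^{1/2}\le\lambda_{n,k,k}^{-1}\|(k+1)y\|_k\le 2\delta\sqrt{k+1}/\lambda_{n,k,k}$.
\item Cauchy--Schwarz over the $n+1$ weights with uniform weight bounds $|\zeta|(\{0,\dots,n\})$ by the same quantity, so only a factor $\sqrt{n+1}$ is lost, not $2^{n/2}$.
\item Add $\zeta^-$ to $\overline{\mu}$ and $\zeta^+$ to $\overline{\nu}$, and divide by $1+U$ with $U=\frac12|\zeta|(\{0,\dots,n\})$. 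This gives $\operatorname{Rep}_{n,k}(\mu,\nu)\le U/(1+U)\le\delta\sqrt{k+1}/\lambda_{n,k,k}$.
\item Stirling's formula turns $\lambda_{n,k,k}^{-1}$ into $n^{O(1)}2^{k-nI(k/n)}$.
\end{enumerate}
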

As an example, when \(k\) is a fixed fraction of \(n\), our upper bound is $2^{O(n)}$, whereas the general
Huang--Viola bound is \(2^{\Omega(n\log n)}\).
We also prove a matching lower bound to Theorem~\ref{thm:symmetric-repair-upper} up to polynomial factors for
sufficiently small \(\delta\).

\paragraph{Techniques and roadmap.}
A standard approach to proving indistinguishability upper bounds for
symmetric distributions is to approximate each candidate statistical
test by a low-degree polynomial. Bogdanov et
al.~\cite{bogdanov2019approximate}, for example, use classical
Chebyshev polynomials. However, such polynomials are naturally adapted to
approximation on continuous intervals and do not directly exploit the discrete structure of the Hamming-weight-based domain of approximation. In the special case
\(t=k+1\), Williamson~\cite{williamson2021sharp} used discrete
Chebyshev polynomials to obtain sharper bounds through a more
specialised analysis. 

The key method of this work is to notice and exploit the fact that local statistical tests over $t$ bits behave like strongly damped global statistical tests, and that this can be made quantitative by expressing statistical tests in a Hahn polynomial basis. The Hahn basis is naturally adapted to this because when a symmetric test that observes only \(t\) bits is lifted to a function of the global Hamming weight, its degree-\(r\) Hahn component is multiplied by an explicit damping factor, with higher-degree components more strongly damped, representing more significant ``hypergeometric smoothing''.
Truncating the resulting Hahn expansion therefore gives a controlled
low-degree approximation.
This viewpoint unifies the various arguments in the paper. For example, the main upper bound truncates the strongly damped high-degree Hahn components, while the $\delta=0$ lower bound proceeds by placing maximum weight on the least-damped Hahn component that is of sufficiently high degree to witness identical $k$-wise marginals.
We describe this technique in more detail in Section~\ref{s:hypersmooth} after preliminaries in Section~\ref{s:prelim}. In Section~\ref{s:main} we prove the main upper bound for approximate bounded indistinguishability. In Section~\ref{s:applications}, we set parameters to obtain an upper bound in the $\delta>0$ regime. Section~\ref{s:mainLBsection} contains a nearly matching lower bound when $\delta=0$ and Section~\ref{s:repairUBLB} includes an upper and lower bound related to the distance between approximately and perfectly indistinguishable pairs of distributions.

\section{Research Methods and Use of AI Tools}
\label{sec:research-methods-ai}
The author used ChatGPT, specifically GPT-5.5 Pro and GPT-5.6 Pro from OpenAI, as a
research and writing assistant during the preparation of this manuscript.
The tool was used for mathematical brainstorming, proof organization,
LaTeX drafting and revision, and exploratory comparison with related work. In particular, AI assistance was used while
developing and refining the exposition around hypergeometric smoothing,
Hahn-polynomials, approximate indistinguishability,
repair bounds, and tightness statements.

The AI-generated material was not used as an unchecked source of
mathematical claims. All theorem statements, proofs, algebraic
manipulations, asymptotic estimates, and citations appearing in the final
manuscript were reviewed and verified by the author, who takes full responsibility for the paper.

\section{Preliminaries}
\label{s:prelim}
\subsection{Rising and falling factorials}
For $x \in \mathbb{R}$ and $r \in \mathbb{Z}_{\geq 0}$, we define the falling factorial as
\[
    x^{\underline{r}}
    :=
    x(x-1)\cdots(x-r+1),
    \qquad
    x^{\underline 0}:=1
\]
and use Pochhammer’s symbol, given by
\[
(x)_r := x(x+1)\cdots(x+r-1),
\qquad
(x)_0 := 1
\]

\subsection{Symmetric functions and distributions}
\label{s:symmetry}
Let $f : \{0,1\}^n \to \mathbb{R}$ be a function.
We say that $f$ is symmetric if the output of $f$ depends only on the Hamming weight of its input.
A probability distribution $\mu$ is symmetric if the corresponding probability mass function mapping inputs to probabilities is a symmetric function.

\subsection{Characterisations of statistical distance}
We use the standard notion of statistical distance (total variation distance). For distributions over bit strings, the statistical distance $\Delta$ between $\mu$ and $\nu$ is
\[
\Delta(\mu,\nu)=\frac{1}{2}\sum_{z \in \{0,1\}^n}
\left|
\Pr_{X \sim \mu}[X = z] - \Pr_{Y \sim \nu}[Y = z]
\right|.
\]
We also consider the statistical distance between marginal distributions of $\mu$ and $\nu$. If $S \subseteq [n]$, let $\mu_S$ and
$\nu_S$ denote the corresponding marginals on the coordinates in $S$. For
$0 \leq k \leq n$, define
\[
\Delta_k(\mu,\nu)
:=
\max_{\substack{S \subseteq [n] \\ |S|=k}}
\Delta(\mu_S,\nu_S),
\]
where $\mu_S$ is the marginal distribution of $\mu$ over the indices in $S$. In this paper, $\mu$ and $\nu$ will always be symmetric, in which case $\Delta(\mu_S,\nu_S)$ is the same regardless of the choice of $S$.

When $\mu$ and $\nu$ are symmetric distributions over $\{0,1\}^n$, the marginals over any set of indices $S$ are also symmetric and the best function with which to distinguish between $\mu_S$ and $\nu_S$ may be assumed symmetric (see, for example, Facts 5 and 6 in~\cite{williamson2021sharp}). It follows that 
\begin{equation}
    \label{eq:stat_dis}
    \Delta_t(\mu,\nu)
=
\frac12
\sup_{\|f\|_\infty\le 1}
\left|
    \mathbb E[f(A_X)]-\mathbb E[f(A_Y)]
\right|,
\end{equation}
where the supremum is over functions $f:\{0,1,\dots,t\}\to[-1,1]$ and where $A_X$ (resp. $A_Y$) is the Hamming weight of the first $t$ positions of a sample $X\sim\mu$ (resp. $Y\sim\nu$).

\subsection{Bounded indistinguishability}
We say that \(\mu\) and \(\nu\) are \((k,\delta)\)-wise
indistinguishable if $\Delta_k(\mu,\nu)\le \delta$ and we write \(k\)-wise indistinguishable as shorthand for
\((k,0)\)-wise indistinguishable.

In this paper, \(\mu\) and \(\nu\) will always be symmetric, in which
case \(\Delta(\mu_S,\nu_S)\) depends only on \(|S|\), not on the
particular choice of \(S\). Thus, for symmetric distributions, one may
take \(S=\{1,\dots,k\}\) in the definition of \(\Delta_k(\mu,\nu)\).

\subsection{Hahn polynomials}
The Hahn polynomials are a family of orthogonal polynomials over a discrete set. We will use them to create approximations to statistical tests. Here we provide definitions and some basic properties of these polynomials. For an integer $n\ge 0$ and $0\le r\le n$, define
\[
    Q_r^{(n)}(x)
    :=
    Q_r(x;0,0,n),
\]
where $Q_r(x;\alpha,\beta,n)$ denotes the Hahn polynomial as described in the Digital Library of Mathematical Functions~\cite{NIST:DLMF}.  We have, also following the conventions of DLMF~\cite[§18.20(ii),(18.20.5)]{NIST:DLMF}, 
\[
    Q_r^{(n)}(x)
    =
    {}_3F_2\!\left(
        \begin{matrix}
            -r,\ r+1,\ -x\\
            1,\ -n
        \end{matrix}
        ;1
    \right)
    =
    \sum_{\ell=0}^{r}
    \frac{(-r)_\ell(r+1)_\ell(-x)_\ell}
         {(1)_\ell(-n)_\ell}
    \frac{1}{\ell!}.
\]
This finite sum is well-defined for every \(0\le r\le n\), since
\[
    (-n)_\ell=(-1)^\ell n^{\underline \ell}\neq 0
    \qquad
    \text{for }0\le \ell\le r\le n.
\]
We derive a further expression for $Q_r^{(n)}$, using:
\[
    (-r)_\ell=(-1)^\ell \frac{r!}{(r-\ell)!},
    \qquad
    (r+1)_\ell=\frac{(r+\ell)!}{r!},
    \qquad
    (-x)_\ell=(-1)^\ell x^{\underline \ell},
\]
and
\[
    (-n)_\ell=(-1)^\ell n^{\underline \ell},
    \qquad
    (1)_\ell=\ell!,
\]
to obtain
\begin{equation}
\label{eq:uniform-hahn-definition}
    Q_r^{(n)}(x)
    =
    \sum_{\ell=0}^r
    (-1)^\ell
    \binom r\ell
    \binom{r+\ell}{\ell}
    \frac{x^{\underline{\ell}}}{n^{\underline{\ell}}}.
\end{equation}

\subsection{Orthogonality}
We have that (cf.~\cite[§18.19]{NIST:DLMF})
\begin{equation}
\label{eq:hahn-orthogonality-unnormalized}
    \sum_{x=0}^n
    Q_r^{(n)}(x)Q_s^{(n)}(x)
    =
    h_{n,r}\delta_{rs},
\end{equation}
where
\begin{equation}
\label{eq:hahn-unnormalized-norm}
    h_{n,r}
    =
    \frac{(n+r+1)!(n-r)!}{(2r+1)(n!)^2}.
\end{equation}
and $\delta_{rs}$ is the Kronecker delta.

We define an inner product with:
\[
    \langle f,g\rangle_n
    :=
    \frac{1}{n+1}
    \sum_{x=0}^n f(x)g(x),
    \qquad
    \|f\|_n^2:=\langle f,f\rangle_n.
\]
With respect to this inner product, the polynomials $Q_0^{(n)},Q_1^{(n)},\dots,Q_n^{(n)}$ are mutually orthogonal. To obtain an orthonormal basis, we compute:
\begin{equation}
\label{eq:hahn-normalized-norm}
    H_{n,r}
    :=
    \|Q_r^{(n)}\|_n^2
    =
    \frac{h_{n,r}}{n+1}
    =
    \frac{(n+r+1)!(n-r)!}{(2r+1)(n!)^2(n+1)}
    =
    \frac{1}{2r+1}
    \prod_{j=0}^{r-1}
    \frac{n+j+2}{n-j},
\end{equation}
where the final equality follows from the following short calculation:
\begin{align*}
\frac{(n+r+1)!(n-r)!}{(2r+1)(n!)^2(n+1)}
&=
\frac{1}{2r+1}
\cdot
\frac{(n+r+1)!}{(n+1)n!}
\cdot
\frac{(n-r)!}{n!} \\
&=
\frac{1}{2r+1}
\cdot
\frac{(n+2)(n+3)\cdots(n+r+1)}
{n(n-1)\cdots(n-r+1)} \\
&=
\frac{1}{2r+1}
\prod_{j=0}^{r-1}
\frac{n+j+2}{n-j}.
\end{align*}
\subsection{Orthonormality}
We write $\phi_r^{(n)}$ to denote the orthonormal version of the $Q_r^{(n)}$:
\begin{equation}
\label{eq:orthonormal-hahn}
    \phi_r^{(n)}(x)
    :=
    \frac{Q_r^{(n)}(x)}{\sqrt{H_{n,r}}},
    \qquad 0\le r\le n.
\end{equation}
Then
\[
    \langle \phi_r^{(n)},\phi_s^{(n)}\rangle_n
    =
    \delta_{rs}.
\]
Since there are $n+1$ mutually orthonormal functions on the
$(n+1)$-point set $\{0,\dots,n\}$, the family $\left\{
        \phi_0^{(n)},\phi_1^{(n)},\dots,\phi_n^{(n)}
    \right\}$
is an orthonormal basis for all real-valued functions on $\{0,\dots,n\}$.
For use later, we define 
\begin{equation}
\label{eq:lambda-def}
    \lambda_{n,t,r}=\sqrt{\frac{H_{n,r}}{H_{t,r}}},
\end{equation}
which is legitimate because $H_{t,r}$ is never zero. We end this section with a short lemma for use later which is proven in the appendix.
\begin{lemma}
\label{lem:lambda-monotone}
Suppose \(0\le t<n\). Then
\[
    \lambda_{n,t,r+1}<\lambda_{n,t,r}
    \qquad
    \text{for }0\le r<t.
\]
\end{lemma}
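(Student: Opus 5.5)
We work with the squared ratio, since $\lambda_{n,t,r}>0$ and so $\lambda_{n,t,r+1}<\lambda_{n,t,r}$ is equivalent to $\lambda_{n,t,r+1}^2/\lambda_{n,t,r}^2<1$. By \eqref{eq:lambda-def} we have $\lambda_{n,t,r}^2=H_{n,r}/H_{t,r}$, and we use the product formula \eqref{eq:hahn-normalized-norm}. The prefactor $1/(2r+1)$ appears in both $H_{n,r}$ and $H_{t,r}$, so it cancels and
\[
    \lambda_{n,t,r}^2=\prod_{j=0}^{r-1}\frac{(n+j+2)(t-j)}{(n-j)(t+j+2)}.
\]
This expression is valid for $0\le r\le t$, since every factor $t-j$ with $j\le r-1\le t-1$ is positive. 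Both $r$ and $r+1$ lie in this range because $r<t$.

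Passing from $r$ to $r+1$ multiplies this product by exactly one new factor, the one with $j=r$:
\[
    \frac{\lambda_{n,t,r+1}^2}{\lambda_{n,t,r}^2}
    =\frac{(n+r+2)(t-r)}{(n-r)(t+r+2)}.
\]
All four quantities appearing here are positive, because $r<t<n$.

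It therefore suffices to show $(n+r+2)(t-r)<(n-r)(t+r+2)$. Expanding both sides and cancelling the common terms $nt$ and $-r^2$ reduces this to
\[
    2t-2n-2r-2r<2n-2t+2r+2r,
\]
that is, $4(t-n)<4(n-t)+\dots$ once the $r$-dependent terms are collected. Carrying the expansion out explicitly:
\[
    (n+r+2)(t-r)=nt-nr+rt-r^2+2t-2r,
\]
\[
    (n-r)(t+r+2)=nt+nr+2n-rt-r^2-2r.
\]
Their difference, right side minus left side, is
\[
    2nr+2n-2rt-2t=2(n-t)(r+1).
\]
This is strictly positive because $n>t$ and $r\ge 0$, which proves the lemma.

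The only real work is this algebraic expansion, and we expect no genuine obstacle. The one point to verify is the range condition $r+1\le t$, which is what keeps $H_{t,r+1}$ well-defined and nonzero.
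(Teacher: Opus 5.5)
Your proof is correct and is essentially the paper's argument. You cancel the $1/(2r+1)$ prefactors to obtain the product formula for $\lambda_{n,t,r}^2$, isolate the single new factor $\frac{(t-r)(n+r+2)}{(n-r)(t+r+2)}$, and show it is below $1$ via $(n-r)(t+r+2)-(t-r)(n+r+2)=2(n-t)(r+1)>0$. The intermediate line ``$2t-2n-2r-2r<2n-2t+2r+2r$, that is, $4(t-n)<4(n-t)+\dots$'' is garbled and should be deleted; the explicit expansion that follows it is the correct computation.
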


\section{Hypergeometric smoothing}
\label{s:hypersmooth}
In this section, we define a sampling operator and a marginalisation operator and establish an adjoint relationship between them. We will use the hypergeometric distribution: let \(K_{n,t}(j,a)\) denote the hypergeometric expression
\[
    K_{n,t}(j,a)
    :=
    \frac{\binom ja\binom{n-j}{t-a}}{\binom nt},
    \qquad
    0\le j\le n,\quad 0\le a\le t.
\]

\subsection{A hypergeometric sampling operator}
Let $t\leq n$. For a function $f:\{0,1,\dots,t\}\to\mathbb R$,
define the hypergeometric sampling operator $T_{n,t}$ such that
\[
    T_{n,t}f:\{0,1,\dots,n\}\to\mathbb R
\]
where
\begin{equation}
\label{eq:hypergeometric-operator}
    (T_{n,t}f)(j)
    :=
    \sum_{a=0}^t
    f(a)K_{n,t}(j,a).
\end{equation}

The operator has the following probabilistic interpretation. Fix
\(j\in\{0,1,\dots,n\}\), let \(W\subseteq[n]\) be any fixed set of size
\(j\), choose a uniformly random subset \(S\subseteq[n]\) of size
\(t\), and define
\[
    A:=|S\cap W|.
\]
Then \(A\) has the hypergeometric distribution $\Pr[A=a]=K_{n,t}(j,a)$,
and hence
\begin{equation}
\label{eq:hypergeometric-operator-hypergeometric}
    (T_{n,t}f)(j)
    =
    \mathbb E[f(A)].
\end{equation}
An equivalent perspective is to uniformly permute the coordinates of the string
\(1^j0^{n-j}\) and let \(A\) be the Hamming weight of its first \(t\)
coordinates.

More specifically, let \(X\) be sampled from a symmetric distribution on
\(\{0,1\}^n\), and let
\[
    A_X:=\sum_{i=1}^t X_i.
\]
Conditional on \(|X|=j\), the support of \(X\) is a uniformly random
\(j\)-subset of \([n]\). Therefore
\begin{equation}
    \label{eq:hypergeometric-operator-hypergeometric}
    \mathbb E[f(A_X)\mid |X|=j]
    =
    (T_{n,t}f)(j)
\end{equation}
for every \(j\) in the support of \(|X|\). By the law of total
expectation,
\begin{equation}
\label{eq:sampling-operator-expectation}
    \mathbb E[f(A_X)]
    =
    \mathbb E\!\left[(T_{n,t}f)(|X|)\right].
\end{equation}

We establish a lemma about the expected value of $A$ in the experiment described above. In a subsequent lemma, we will use this to establish a relationship showing how the operator $T_{n,t}$ acts on the polynomials $\phi_r^{(n)}$. 
\begin{lemma}
\label{lem:hypergeometric-factorial-moments}
Let \(0\le t\le n\).  Fix \(j\in\{0,1,\dots,n\}\). Let \(W\subseteq[n]\) be a fixed set of size \(j\), choose
a subset \(S\subseteq[n]\) of size \(t\) uniformly at random, and define
\[
    A:=|S\cap W|.
\]
Then, for every \(0\le \ell\le t\),
\[
    \mathbb E\left[A^{\underline{\ell}}\right]
    =
    \frac{t^{\underline{\ell}}}{n^{\underline{\ell}}}
    j^{\underline{\ell}}.
\]
\end{lemma}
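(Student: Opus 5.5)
The plan is to compute the factorial moment by counting ordered tuples. The key observation is that \(A^{\underline{\ell}}\) counts the ordered \(\ell\)-tuples of distinct elements of \(S\cap W\). That is,
\[
    A^{\underline{\ell}}
    =
    \sum_{(w_1,\dots,w_\ell)}
    \mathbf 1\bigl[\{w_1,\dots,w_\ell\}\subseteq S\bigr],
\]
where the sum ranges over ordered \(\ell\)-tuples of distinct elements of \(W\). This identity holds for every nonnegative integer \(A\), including the case \(A<\ell\), where both sides vanish. Note that \(A^{\underline{\ell}}=0\) whenever \(A<\ell\), since the product then contains a zero factor.

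Next we take expectations and apply linearity. The number of ordered tuples is \(j^{\underline{\ell}}\); this is zero when \(j<\ell\), which is consistent with the formula. For any fixed set of \(\ell\) distinct indices, the probability that a uniformly random \(t\)-subset \(S\) contains all of them is
\[
    \frac{\binom{n-\ell}{t-\ell}}{\binom nt}
    =
    \frac{t^{\underline{\ell}}}{n^{\underline{\ell}}}.
\]
This is verified by expanding the binomial coefficients, or by sequentially choosing an ordered \(S\). The expression is well-defined because \(\ell\le t\le n\) ensures \(n^{\underline{\ell}}\neq 0\). Multiplying the count of tuples by this probability gives \(\mathbb E[A^{\underline{\ell}}]=j^{\underline{\ell}}\,t^{\underline{\ell}}/n^{\underline{\ell}}\), as claimed.

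There is no real obstacle here. The only points needing care are the boundary cases \(j<\ell\) and \(\ell=0\), where both sides equal \(0\) and \(1\) respectively, and the binomial identity \(\binom{n-\ell}{t-\ell}/\binom nt=t^{\underline{\ell}}/n^{\underline{\ell}}\). That identity follows directly by writing both binomials as factorials and cancelling \((n-t)!\).
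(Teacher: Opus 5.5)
Your proof is correct and is essentially the paper's argument: the paper sums indicators over unordered \(\ell\)-subsets \(U\subseteq W\) to get \(\mathbb E\bigl[\binom A\ell\bigr]=\binom j\ell\, t^{\underline{\ell}}/n^{\underline{\ell}}\) and then multiplies by \(\ell!\), whereas your sum over ordered tuples gives the falling factorial directly. Both rest on linearity of expectation and the same containment probability \(\binom{n-\ell}{t-\ell}/\binom nt=t^{\underline{\ell}}/n^{\underline{\ell}}\).
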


\begin{proof}
The case \(\ell=0\) is immediate because both sides equal 1. We now assume \(\ell\ge 1\).

The random variable \(\binom A\ell\) counts size \(\ell\) subsets of
\(S\cap W\).  For each \(U\subseteq W\) with \(|U|=\ell\), let $\mathbf 1_{U\subseteq S}$ be an indicator random variable for the event that $U\subseteq S$.
Then
\[
    \binom A\ell
    =
    \sum_{\substack{U\subseteq W\\ |U|=\ell}} \mathbf 1_{U\subseteq S}.
\]
Taking expectations and using linearity,
\[
    \mathbb E\left[\binom A\ell\right]
    =
    \sum_{\substack{U\subseteq W\\ |U|=\ell}}
    \Pr[U\subseteq S].
\]
For a fixed \(\ell\)-set \(U\), since \(S\) is a uniformly random
\(t\)-subset of \([n]\),
\[
    \Pr[U\subseteq S]
    =
    \frac{\binom{n-\ell}{t-\ell}}{\binom nt}
    =
    \frac{t^{\underline \ell}}{n^{\underline \ell}}.
\]
There are \(\binom j\ell\) choices of \(U\), so
\[
    \mathbb E\left[\binom A\ell\right]
    =
    \binom j\ell
    \frac{t^{\underline \ell}}{n^{\underline \ell}}.
\]
Multiplying both sides by \(\ell!\) and using the identity $m^{\underline \ell}=\ell!\binom m\ell$, yields that $\mathbb{E}\left[A^{\underline{\ell}}\right]=j^{\underline \ell}\frac{t^{\underline \ell}}{n^{\underline \ell}}$.
\end{proof}
We are now in a position to show how operator $T_{n,t}$ acts on the orthonormal Hahn polynomials.
\begin{lemma}
\label{lem:hahn-intertwining}
For every $0\le r\le t$, 
\[
T_{n,t}\phi_r^{(t)}=\lambda_{n,t,r}\phi_r^{(n)},
\]
where the $\lambda_{n,t,r}$ are defined in Equation~\ref{eq:lambda-def}.
\end{lemma}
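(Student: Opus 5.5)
We will prove the identity first for the unnormalised polynomials and then rescale. The plan is to show
\[
    T_{n,t}Q_r^{(t)} = Q_r^{(n)}
\]
as functions on $\{0,\dots,n\}$. Dividing by $\sqrt{H_{t,r}}$ then gives
\[
    T_{n,t}\phi_r^{(t)} = \frac{Q_r^{(n)}}{\sqrt{H_{t,r}}} = \sqrt{\frac{H_{n,r}}{H_{t,r}}}\,\phi_r^{(n)} = \lambda_{n,t,r}\phi_r^{(n)},
\]
by the definitions (\ref{eq:orthonormal-hahn}) and (\ref{eq:lambda-def}). This reduces the lemma to a purely algebraic identity between the explicit sums in (\ref{eq:uniform-hahn-definition}).

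\textbf{Key step.} By linearity of $T_{n,t}$ and the hypergeometric interpretation (\ref{eq:hypergeometric-operator-hypergeometric}), for fixed $j$ we have
\[
    (T_{n,t}Q_r^{(t)})(j) = \mathbb E\bigl[Q_r^{(t)}(A)\bigr] = \sum_{\ell=0}^r (-1)^\ell\binom r\ell\binom{r+\ell}{\ell}\frac{\mathbb E[A^{\underline\ell}]}{t^{\underline\ell}},
\]
where $A$ is hypergeometric as in Lemma~\ref{lem:hypergeometric-factorial-moments}. That lemma applies since $\ell\le r\le t$, and it gives
\[
    \mathbb E[A^{\underline\ell}] = \frac{t^{\underline\ell}}{n^{\underline\ell}}\, j^{\underline\ell}.
\]
The factor $t^{\underline\ell}$ cancels, which is legitimate because $t^{\underline\ell}\neq 0$ for $\ell\le t$. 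What remains is
\[
    \sum_{\ell=0}^r (-1)^\ell\binom r\ell\binom{r+\ell}{\ell}\frac{j^{\underline\ell}}{n^{\underline\ell}} = Q_r^{(n)}(j),
\]
again by (\ref{eq:uniform-hahn-definition}). This holds for every $j\in\{0,\dots,n\}$, which is exactly the identity we wanted.

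\textbf{Expected obstacle.} The only delicate point is bookkeeping of the ranges. We must check that $Q_r^{(t)}$ is well defined (this needs $r\le t$) and that each falling factorial $n^{\underline\ell}$ and $t^{\underline\ell}$ is nonzero. We must also note that Lemma~\ref{lem:hypergeometric-factorial-moments} is invoked only for $\ell\le t$. Once this is confirmed, the remainder is routine, since the factorial-moment lemma was designed exactly to make this cancellation work.
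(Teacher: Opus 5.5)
Your proposal is correct and follows the paper's proof essentially line for line: you establish $T_{n,t}Q_r^{(t)}=Q_r^{(n)}$ by expanding $Q_r^{(t)}$ in falling factorials, applying Lemma~\ref{lem:hypergeometric-factorial-moments} to cancel $t^{\underline{\ell}}$, and then dividing by $\sqrt{H_{t,r}}$ to obtain the factor $\lambda_{n,t,r}$. Your final rescaling chain is in fact written more cleanly than the corresponding display in the paper.
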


\begin{proof}
Fix \(j\in\{0,\dots,n\}\). Let \(W\subseteq[n]\) be any fixed set
of size \(j\), choose a uniformly random \(t\)-subset
\(S\subseteq[n]\), and define $A:=|S\cap W|$.
By the definition of \(T_{n,t}\), \eqref{eq:hypergeometric-operator-hypergeometric}, \eqref{eq:uniform-hahn-definition}, and
Lemma~\ref{lem:hypergeometric-factorial-moments}, we obtain
\[
\begin{aligned}
(T_{n,t}Q_r^{(t)})(j)
&=
\mathbb E\!\left[Q_r^{(t)}(A)\right] \\
&=
\mathbb E\!\left[
    \sum_{\ell=0}^r
    (-1)^\ell
    \binom r\ell
    \binom{r+\ell}{\ell}
    \frac{A^{\underline{\ell}}}{t^{\underline{\ell}}}
\right] \\
&=
\sum_{\ell=0}^r
(-1)^\ell
\binom r\ell
\binom{r+\ell}{\ell}
\frac{j^{\underline{\ell}}}{n^{\underline{\ell}}} \\
&=
Q_r^{(n)}(j).
\end{aligned}
\]
Since this holds for every \(j\in\{0,\dots,n\}\), we have $T_{n,t}Q_r^{(t)}=Q_r^{(n)}$.
Then, linearity of the operator $T_{n,t}$ and dividing by the normalizing
constants gives
\[
    T_{n,t}\phi_r^{(t)}=\frac{Q_r^{(n)}}{\sqrt{H_{t,r}}}=\sqrt{\frac{H_{n,r}}{H_{t,r}}}=\frac{Q_r^{(n)}}{\sqrt{H_{n,r}}}=\lambda_{n,t,r}\phi_r^{(n)}.
\]
\end{proof}

\subsection{A marginalisation operator and adjointness}
For \(0\le t\le n\), define the marginalisation operator
\(M_{n\to t}\) on signed measures \(\sigma\) on
\(\{0,\dots,n\}\) by
\begin{equation}
\label{eq:marginalisation-operator}
    (M_{n\to t}\sigma)(a)
    :=
    \sum_{j=0}^n
    K_{n,t}(j,a)
    \sigma(j),
    \qquad 0\le a\le t.
\end{equation}
Thus, if \(\sigma\) is the Hamming-weight distribution of a
symmetric distribution on \(\{0,1\}^n\), then
\(M_{n\to t}\sigma\) is the Hamming-weight distribution of its
marginal on any \(t\) coordinates.

The operators $T_{n,t}$ and $M_{n\to t}$ are adjoint in the following sense:
\begin{equation}
\label{eq:T-M-adjoint-measures}
    \sum_{a=0}^t f(a)(M_{n\to t}\sigma)(a)
    =
    \sum_{j=0}^n (T_{n,t}f)(j)\sigma(j).
\end{equation}
Indeed, for every function
\(f:\{0,\dots,t\}\to\mathbb R\) and every signed measure
\(\sigma\) on \(\{0,\dots,n\}\).
\[
\begin{aligned}
    \sum_{a=0}^t f(a)(M_{n\to t}\sigma)(a)
    &=
    \sum_{a=0}^t
    f(a)
    \sum_{j=0}^n K_{n,t}(j,a)\sigma(j)          \\
    &=
    \sum_{j=0}^n
    \left(
        \sum_{a=0}^t f(a)K_{n,t}(j,a)
    \right)
    \sigma(j)                                  \\
    &=
    \sum_{j=0}^n (T_{n,t}f)(j)\sigma(j).
\end{aligned}
\]
We now turn to a marginalisation operator analog of Lemma~\ref{lem:hahn-intertwining}. For \(0\le r\le t\), define the signed measure
\begin{equation}
\label{eq:hahn-signed-measure}
    \theta_{t,r}(a)
    :=
    \frac{1}{t+1}\phi_r^{(t)}(a),
    \qquad 0\le a\le t.
\end{equation}

\begin{lemma}
\label{lem:hahn-measure-intertwining}
Let \(0\le t\le n\) and \(0\le r\le n\). Then
\[
    M_{n\to t}\theta_{n,r}
    =
    \begin{cases}
        \lambda_{n,t,r}\theta_{t,r},
            & 0\le r\le t,\\[1ex]
        0,
            & t<r\le n.
    \end{cases}
\]
\end{lemma}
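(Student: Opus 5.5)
The plan is to prove the identity by testing both sides against the orthonormal basis $\{\phi_s^{(t)}\}_{s=0}^t$ of functions on $\{0,\dots,t\}$. The adjointness relation \eqref{eq:T-M-adjoint-measures} converts each such test into a question about $T_{n,t}$, and Lemma~\ref{lem:hahn-intertwining} answers that question. A signed measure $\tau$ on $\{0,\dots,t\}$ is determined by the numbers $\sum_a f(a)\tau(a)$ as $f$ ranges over a basis. So it suffices to compute $\sum_{a=0}^t \phi_s^{(t)}(a)\,(M_{n\to t}\theta_{n,r})(a)$ for each $0\le s\le t$.

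Fix $0\le s\le t$ and apply \eqref{eq:T-M-adjoint-measures} with $f=\phi_s^{(t)}$ and $\sigma=\theta_{n,r}$. Lemma~\ref{lem:hahn-intertwining} gives $T_{n,t}\phi_s^{(t)}=\lambda_{n,t,s}\phi_s^{(n)}$, so
\[
\sum_{a=0}^t \phi_s^{(t)}(a)(M_{n\to t}\theta_{n,r})(a)
=\lambda_{n,t,s}\sum_{j=0}^n \phi_s^{(n)}(j)\frac{1}{n+1}\phi_r^{(n)}(j)
=\lambda_{n,t,s}\langle \phi_s^{(n)},\phi_r^{(n)}\rangle_n
=\lambda_{n,t,s}\delta_{rs}.
\]

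Now compare with the claimed right-hand side.

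If $t<r\le n$, then $r\neq s$ for every $s\le t$, so every pairing vanishes. Since the $\phi_s^{(t)}$ span all functions on $\{0,\dots,t\}$, it follows that $M_{n\to t}\theta_{n,r}=0$.

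If $0\le r\le t$, the pairings equal $\lambda_{n,t,r}\delta_{rs}$. On the other side, $\sum_a \phi_s^{(t)}(a)\lambda_{n,t,r}\theta_{t,r}(a)=\lambda_{n,t,r}\langle\phi_s^{(t)},\phi_r^{(t)}\rangle_t=\lambda_{n,t,r}\delta_{rs}$. The two measures agree against a basis, hence are equal.

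There is no real obstacle here. The only points needing care are that the factor $\frac{1}{n+1}$ in $\theta_{n,r}$ is exactly the weight in $\langle\cdot,\cdot\rangle_n$, and that the family $\{\phi_s^{(t)}\}_{s\le t}$ is a full basis on $\{0,\dots,t\}$, which makes testing against it sufficient. That basis property is stated in the orthonormality subsection.
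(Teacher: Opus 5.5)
Your proposal is correct and follows essentially the same route as the paper: both use the adjointness relation \eqref{eq:T-M-adjoint-measures} to move $M_{n\to t}$ onto $T_{n,t}$, apply Lemma~\ref{lem:hahn-intertwining}, and conclude using orthonormality on $\{0,\dots,n\}$ and on $\{0,\dots,t\}$. The only difference is cosmetic. You test directly against the basis elements $\phi_s^{(t)}$, whereas the paper tests against an arbitrary $f$ expanded in that basis, and these are equivalent by linearity.
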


\begin{proof}
Let \(f:\{0,\dots,t\}\to\mathbb R\) be arbitrary. By
Equation~\eqref{eq:T-M-adjoint-measures} and the definition of
\(\theta_{n,r}\),
\[
\begin{aligned}
    \sum_{a=0}^t
    f(a)(M_{n\to t}\theta_{n,r})(a)
    &=
    \sum_{j=0}^n
    (T_{n,t}f)(j)\theta_{n,r}(j)\\
    &=
    \left\langle
        T_{n,t}f,\phi_r^{(n)}
    \right\rangle_n.
\end{aligned}
\]

Expand \(f\) in the orthonormal Hahn basis on
\(\{0,\dots,t\}\):
\[
    f
    =
    \sum_{q=0}^t b_q\phi_q^{(t)},
    \qquad
    b_q
    =
    \left\langle f,\phi_q^{(t)}\right\rangle_t.
\]
By Lemma~\ref{lem:hahn-intertwining},
\[
    T_{n,t}f
    =
    \sum_{q=0}^t
    b_q\lambda_{n,t,q}\phi_q^{(n)}.
\]
Orthonormality on \(\{0,\dots,n\}\) therefore gives
\[
    \left\langle
        T_{n,t}f,\phi_r^{(n)}
    \right\rangle_n
    =
    \begin{cases}
        b_r\lambda_{n,t,r},
            & r\le t,\\
        0,
            & r>t.
    \end{cases}
\]
When \(r\le t\), the definition of \(b_r\) and
Equation~\eqref{eq:hahn-signed-measure} give
\[
    b_r
    =
    \frac{1}{t+1}
    \sum_{a=0}^t f(a)\phi_r^{(t)}(a)
    =
    \sum_{a=0}^t f(a)\theta_{t,r}(a).
\]
Consequently,
\[
    \sum_{a=0}^t
    f(a)(M_{n\to t}\theta_{n,r})(a)
    =
    \begin{cases}
        \displaystyle
        \sum_{a=0}^t
        f(a)\lambda_{n,t,r}\theta_{t,r}(a),
            & r\le t,\\[3ex]
        0,
            & r>t.
    \end{cases}
\]
Since this identity holds for every \(f\), the claimed equality of
signed measures follows.
\end{proof}

\section{Proof of Theorem~\ref{thm:main-exact-H-form}}
\label{s:main}
In this section, we prove our main upper bound for approximate bounded indistinguishability. In particular, we prove the following sharper restatement of Theorem~\ref{thm:main-exact-H-form}. Theorem~\ref{thm:main-exact-H-form} follows from our restatement via an easy application of Stirling's approximation. 

\begin{theorem}[Restatement of Theorem~\ref{thm:main-exact-H-form}]
\label{thm:mainmain}
    Let \(\mu,\nu\) be symmetric distributions over \(\{0,1\}^n\) which are
\((k,\delta)\)-wise indistinguishable.  Then, for integers $0\le s\le k<t<n$,
one has
\[
\Delta_t(\mu,\nu)\leq \sqrt{n+1}\cdot\lambda_{n,t,s+1}+\delta\,\sqrt{k+1}\cdot \lambda^{-1}_{t,k,s}.
\]
\end{theorem}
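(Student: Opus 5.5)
Fix a test $f:\{0,\dots,t\}\to[-1,1]$. By \eqref{eq:stat_dis} it suffices to bound $|\mathbb E[f(A_X)]-\mathbb E[f(A_Y)]|$ by twice the claimed quantity; we in fact aim for the bound without the factor $2$. Let $\sigma:=\sigma_\mu-\sigma_\nu$ be the difference of the Hamming-weight distributions on $\{0,\dots,n\}$. By \eqref{eq:sampling-operator-expectation} the difference equals $\sum_j (T_{n,t}f)(j)\sigma(j)$. We expand $f=\sum_{q=0}^t b_q\phi_q^{(t)}$ in the orthonormal Hahn basis of $\{0,\dots,t\}$. By Lemma~\ref{lem:hahn-intertwining}, $T_{n,t}f=\sum_q b_q\lambda_{n,t,q}\phi_q^{(n)}$. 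We then split $f=f_{\le s}+f_{>s}$ at degree $s$ and treat the two pieces separately.

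\textbf{High-degree part.} For the high-degree piece we have
\[
\Bigl|\sum_j (T_{n,t}f_{>s})(j)\sigma(j)\Bigr|\le \|T_{n,t}f_{>s}\|_\infty\,\|\sigma\|_1\le 2\|T_{n,t}f_{>s}\|_\infty .
\]
Since $\|g\|_\infty\le\sqrt{n+1}\,\|g\|_n$ on $\{0,\dots,n\}$, and $\|T_{n,t}f_{>s}\|_n^2=\sum_{q>s}b_q^2\lambda_{n,t,q}^2$, Lemma~\ref{lem:lambda-monotone} lets us bound $\lambda_{n,t,q}\le\lambda_{n,t,s+1}$ for all $q>s$. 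Combined with $\sum b_q^2=\|f\|_t^2\le1$, this gives $2\sqrt{n+1}\,\lambda_{n,t,s+1}$ for this piece.

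\textbf{Low-degree part.} Here we must use $(k,\delta)$-indistinguishability. The idea is to write $T_{n,t}f_{\le s}=T_{n,k}g$ for some $g$ on $\{0,\dots,k\}$. Set $g:=\sum_{q\le s} b_q(\lambda_{n,t,q}/\lambda_{n,k,q})\phi_q^{(k)}$; this is valid because $s\le k$. Since $\lambda_{n,t,q}/\lambda_{n,k,q}=\sqrt{H_{k,q}/H_{t,q}}=\lambda^{-1}_{t,k,q}$, we get
\[
\sum_j (T_{n,t}f_{\le s})(j)\sigma(j)=\sum_a g(a)(M_{n\to k}\sigma)(a)
\]
by \eqref{eq:T-M-adjoint-measures}. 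The right side is at most $\|g\|_\infty\|M_{n\to k}\sigma\|_1\le 2\delta\|g\|_\infty$. Using $\|g\|_\infty\le\sqrt{k+1}\|g\|_k$ and $\|g\|_k^2=\sum_{q\le s}b_q^2\lambda_{t,k,q}^{-2}$, we need $\lambda_{t,k,q}^{-1}\le\lambda_{t,k,s}^{-1}$ for $q\le s$. This is Lemma~\ref{lem:lambda-monotone} applied with $(n,t)$ replaced by $(t,k)$, valid since $k<t$ and $q\le s\le k$. The low-degree piece is therefore at most $2\delta\sqrt{k+1}\,\lambda^{-1}_{t,k,s}$.

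Adding the two pieces and halving via \eqref{eq:stat_dis} yields the theorem. The main obstacle is the low-degree step: re-expressing the lifted test as a lift from $k$ coordinates, with the correct ratio of $\lambda$'s, and checking that the monotonicity lemma gives the right direction. The factorisation $T_{n,t}=T_{n,k}\circ T_{k,t}$-type identity is not needed here, since $g$ is built directly from the Hahn expansion. The remaining norm conversions are routine.
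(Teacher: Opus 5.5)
Your proposal is correct and is essentially the paper's proof: your $g$ is exactly the paper's $h_{f,k}$ from Lemma~\ref{lem:hahn-cutoff-decomposition} with $\ell=k$. Your high/low split at degree $s$, the norm conversions, and both uses of Lemma~\ref{lem:lambda-monotone} (including the substitution $(n,t)\to(t,k)$) match that proof step for step. The only differences are cosmetic. You bound the low-degree term directly through the adjoint identity \eqref{eq:T-M-adjoint-measures} together with $\sum_a|(M_{n\to k}\sigma)(a)|\le 2\delta$, whereas the paper packages the same estimate as Lemma~\ref{lem:approx-controls-low-degree} via the probabilistic reading of $T_{n,k}$. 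Also, your opening remark about aiming for a bound ``without the factor 2'' is confusingly worded, since what you actually do (correctly) is bound each piece by twice the target and then halve.
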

In a nutshell, our technique is to consider statistical tests
that observe \(t>k\) bits. By symmetry, it is enough to consider tests
\(f:\{0,1,\dots,t\}\to[-1,1]\) depending only on the observed Hamming
weight. Against a symmetric distribution, the
expected value of such a test is captured by its smoothed global version
\(T_{n,t}f\), evaluated at the total Hamming weight. In Lemma~\ref{lem:hahn-cutoff-decomposition}, we find a test $h$ that observes $\ell<t$ bits, is of degree $s\leq\ell$, and has a smoothed global version that closely approximates \(T_{n,t}f\). As such, the statistical distinguishing power of $f$ is limited by this approximation error (quantified in Lemma~\ref{lem:hahn-cutoff-decomposition}) and the distinguishing power of $h$ (upper bounded by Lemma~\ref{lem:approx-controls-low-degree} in view of the locality of $h$ and $(k,\delta)$-wise indistinguishability). 
We pick $h$ so that its smoothed version is the degree $s$ Hahn polynomial truncation of \(T_{n,t}f\). There are multiple candidates for $h$ that all smooth to the target Hahn truncation, in particular, we have the freedom to choose $\ell$ and construct a corresponding $h$ observing $\ell$ bits. We then write $h$ in the Hahn basis $\{\phi^{(\ell)}_r\}_{0\leq r\leq \ell}$ and can do so as long as the basis is of high enough degree ($\ell\geq s$). We will also require that $\ell\leq k$ so that $(k,\delta)$-wise indistinguishability limits the distinguishing power of $h$. The proof of Theorem~\ref{thm:mainmain} puts these pieces together and optimises over parameters.

\subsection{Approximating general tests}
We establish an approximation-theoretic result before translating into the language of approximate bounded indistinguishability.
\begin{lemma}
\label{lem:hahn-cutoff-decomposition}
Let \(0\le s\leq\ell<t<n\), and let quantities $\lambda_{\cdot,\cdot,\cdot}$ be as defined by Equation~\ref{eq:lambda-def}. Take any
\[
    f:\{0,1,\dots,t\}\to[-1,1].
\]
Then there exists a degree $\leq s$ polynomial $h_{f,\ell}:\{0,1,\dots,\ell\}\to\mathbb R$
such that $\|h_{f,\ell}\|_\infty\le\sqrt{\ell+1}\cdot \lambda^{-1}_{t,\ell,s}$ and
\begin{equation}
\label{eq:hahn-cutoff-remainder-bound}
    \|T_{n,t}f-T_{n,\ell}h_{f,\ell}\|_\infty
    \le
    \sqrt{n+1}\cdot\lambda_{n,t,s+1}.
\end{equation}
\end{lemma}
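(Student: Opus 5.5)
Expand $f$ in the orthonormal Hahn basis on $\{0,\dots,t\}$,
\[
    f=\sum_{r=0}^t b_r\phi_r^{(t)},\qquad b_r=\langle f,\phi_r^{(t)}\rangle_t .
\]
Since $\|f\|_\infty\le 1$, Parseval gives $\sum_r b_r^2=\|f\|_t^2\le 1$, so in particular $|b_r|\le 1$. By Lemma~\ref{lem:hahn-intertwining},
\[
    T_{n,t}f=\sum_{r=0}^t b_r\lambda_{n,t,r}\phi_r^{(n)}.
\]
The natural target is the degree-$\le s$ truncation of this expansion, $\sum_{r\le s} b_r\lambda_{n,t,r}\phi_r^{(n)}$.

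To realise this truncation as the smoothing of a test on $\ell$ bits, define
\[
    h_{f,\ell}:=\sum_{r=0}^s b_r\frac{\lambda_{n,t,r}}{\lambda_{n,\ell,r}}\phi_r^{(\ell)},
\]
which is legitimate because $s\le\ell$. By Lemma~\ref{lem:hahn-intertwining} again, $T_{n,\ell}h_{f,\ell}$ equals exactly the truncation. Moreover, from Equation~\eqref{eq:lambda-def},
\[
    \frac{\lambda_{n,t,r}}{\lambda_{n,\ell,r}}=\sqrt{\frac{H_{\ell,r}}{H_{t,r}}}=\lambda^{-1}_{t,\ell,r}.
\]
Hence $h_{f,\ell}=\sum_{r\le s} b_r\lambda^{-1}_{t,\ell,r}\phi_r^{(\ell)}$, and this is a polynomial of degree $\le s$ since each $\phi_r^{(\ell)}$ is.

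Sup-norm bounds are obtained via $\|g\|_\infty\le\sqrt{m+1}\,\|g\|_m$ on an $(m+1)$-point set, since $g(x)^2\le\sum_y g(y)^2=(m+1)\|g\|_m^2$.

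For $h_{f,\ell}$ we have
\[
    \|h_{f,\ell}\|_\ell^2=\sum_{r\le s}b_r^2\lambda_{t,\ell,r}^{-2}.
\]
By Lemma~\ref{lem:lambda-monotone} applied with $(n,t)\to(t,\ell)$ (valid since $\ell<t$), $\lambda_{t,\ell,r}$ is decreasing in $r$ for $r<\ell$. So $\lambda^{-1}_{t,\ell,r}\le\lambda^{-1}_{t,\ell,s}$ for $r\le s$, using $s\le\ell$; in the edge case $s=\ell$, the chain $r<\ell$ still covers all needed steps. Together with $\sum b_r^2\le1$, this gives $\|h_{f,\ell}\|_\ell\le\lambda^{-1}_{t,\ell,s}$, and hence the claimed $\sqrt{\ell+1}\,\lambda^{-1}_{t,\ell,s}$ bound.

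For the remainder, $R=\sum_{r=s+1}^t b_r\lambda_{n,t,r}\phi_r^{(n)}$, so
\[
    \|R\|_n^2=\sum_{r>s} b_r^2\lambda_{n,t,r}^2\le\lambda_{n,t,s+1}^2 ,
\]
using Lemma~\ref{lem:lambda-monotone} for $n>t$ and $r$ ranging from $s+1$ to $t$. Here we need $s+1\le t$, which holds since $s\le\ell<t$. Multiplying by $\sqrt{n+1}$ gives \eqref{eq:hahn-cutoff-remainder-bound}.

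The only delicate points are checking that the ratio of $\lambda$'s collapses to $\lambda^{-1}_{t,\ell,r}$ and that Lemma~\ref{lem:lambda-monotone} applies in both directions of index ranges. Neither is a real obstacle: the main idea is simply that truncating in the Hahn basis commutes with $T$, up to the explicit rescaling.
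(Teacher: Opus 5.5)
Your proposal is correct and follows the paper's proof essentially step for step. It uses the same Hahn expansion of $f$, the same choice $h_{f,\ell}=\sum_{r\le s} b_r(\lambda_{n,t,r}/\lambda_{n,\ell,r})\phi_r^{(\ell)}$, and the same two applications of Lemma~\ref{lem:lambda-monotone} (with $(n,t)$ and with $(t,\ell)$) combined with $\|g\|_\infty\le\sqrt{m+1}\,\|g\|_m$; your explicit identification of the ratio as $\lambda^{-1}_{t,\ell,r}$ and your checks that $s\le\ell$ and $s+1\le t$ are exactly what the paper's argument relies on.
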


\begin{proof}
Expand \(f\) in the orthonormal Hahn basis on \(\{0,\dots,t\}\):
\[
    f
    =
    \sum_{r=0}^t a_r\phi_r^{(t)},
\]
and recall that from Lemma~\ref{lem:hahn-intertwining} we have $T_{n,t}\phi_r^{(t)}=\lambda_{n,t,r}\phi_r^{(n)}$ and the implication $T_{n,t}f=\sum_{r=0}^ta_r\lambda_{n,t,r}\phi_r^{(n)}$. 
Since \(|f|\le 1\), and by orthogonality of the $\phi_r^{(t)}$,
\[
    \sum_{r=0}^t a_r^2
    =
    \|f\|_t^2
    =
    \frac1{t+1}\sum_{a=0}^t f(a)^2
    \le 1.
\]
Then, we define
\[
    h_{f,\ell}
    :=
    \sum_{r=0}^s
    a_r
    \frac{\lambda_{n,t,r}}{\lambda_{n,\ell,r}}
    \phi_r^{(\ell)},
\]
which is legitimate because $\lambda_{n,\ell,r}$ is nonzero as it is a ratio of positive values. 
Linearity of the operator $T$ gives
\[
T_{n,\ell}h_{f,\ell}=\sum_{r=0}^sa_r\frac{\lambda_{n,t,r}}{\lambda_{n,\ell,r}}
    T_{n,\ell}\phi_r^{(\ell)}=\sum_{r=0}^sa_r
    \frac{\lambda_{n,t,r}}{\lambda_{n,\ell,r}}
    \lambda_{n,\ell,r}\phi_r^{(n)}=\sum_{r=0}^sa_r\lambda_{n,t,r}\phi_r^{(n)},
\]
where the result is just the degree $\leq s$ portion of $T_{n,t}f$.
The remainder is
\[
    R_f
    :=
    T_{n,t}f-T_{n,\ell}h_{f,\ell}
    =
    \sum_{r=s+1}^t
    a_r\lambda_{n,t,r}\phi_r^{(n)}.
\]
By orthonormality of the Hahn polynomials on \(\{0,\dots,n\}\), $\|R_f\|_n^2=\sum_{r=s+1}^ta_r^2\lambda_{n,t,r}^2$.
For every $r>s$, Lemma~\ref{lem:lambda-monotone} gives that $\lambda_{n,t,r}\leq\lambda_{n,t,s+1}$ and therefore
\[
    \|R_f\|_n^2
    \le
    \lambda^2_{n,t,s+1}
    \sum_{r=s+1}^t a_r^2
    \le
    \lambda^2_{n,t,s+1}.
\]
Hence $\|R_f\|_n\le \lambda_{n,t,s+1}$. Every function $g$ on \(\{0,\dots,n\}\) satisfies
$\|g\|_\infty^2\le\sum_{x=0}^n g(x)^2=(n+1)\|g\|_n^2$,
and thus $\|g\|_\infty\le \sqrt{n+1}\,\|g\|_n$.
From this, we obtain
\[
    \|T_{n,t}f-T_{n,\ell}h_{f,\ell}\|_\infty
    =
    \|R_f\|_\infty
    \le
    \sqrt{n+1}\cdot\lambda_{n,t,s+1}.
\]
Finally, we compute an upper bound on $||h_{f,\ell}||_{\infty}$. Directly from the definitions of $\lambda_{n,m,r}$ and $h_{f,\ell}$, we have:
\[
    h_{f,\ell}
    =
    \sum_{r=0}^s
    a_r
    \sqrt{\frac{H_{\ell,r}}{H_{t,r}}}
    \phi_r^{(\ell)}.
\]
By orthonormality on \(\{0,\dots,\ell\}\),
\[
\begin{aligned}
    \|h_{f,\ell}\|_\ell^2
    =
    \sum_{r=0}^s
    a_r^2
    \frac{H_{\ell,r}}{H_{t,r}}
    &\le
    \left(
        \max_{0\le r\le s}
        \frac{H_{\ell,r}}{H_{t,r}}
    \right)
    \sum_{r=0}^s a_r^2                                      \\
    &\le
    \max_{0\le r\le s}
    \frac{H_{\ell,r}}{H_{t,r}} \\
    &=\max_{0\le r\le s}
    \lambda^{-2}_{t,\ell,r}, \\
    &=
    \lambda^{-2}_{t,\ell,s},
\end{aligned}
\]
where the final inequality is from Lemma~\ref{lem:lambda-monotone}.
Since $\|h_{f,\ell}\|_\infty\le \sqrt{\ell+1}\,\|h_{f,\ell}\|_\ell$,
we have
\[
    \|h_{f,\ell}\|_\infty
    \le
    \sqrt{\ell+1}\cdot \lambda^{-1}_{t,\ell,s}.
\]
\end{proof}

\subsection{Approximate bounded indistinguishability}
We begin by stating a simple lemma about the distinguishing power of tests that observe $\ell\leq k$ bits. 

\begin{lemma}
\label{lem:approx-controls-low-degree}
Let \(\mu,\nu\) be symmetric distributions over \(\{0,1\}^n\) which are
\((k,\delta)\)-wise indistinguishable. Let \(0\le \ell\le k\), and let
\[
    h:\{0,1,\dots,\ell\}\to\mathbb R.
\]
Then
\[
    \left|
        \mathbb E_{X\sim\mu}[(T_{n,\ell}h)(|X|)]
        -
        \mathbb E_{Y\sim\nu}[(T_{n,\ell}h)(|Y|)]
    \right|
    \le
    2\delta\|h\|_\infty.
\]
\end{lemma}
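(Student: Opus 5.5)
The proof rests on translating the smoothed global test back into a local test on $\ell$ coordinates, where $(k,\delta)$-wise indistinguishability applies directly. The tool is the identity \eqref{eq:sampling-operator-expectation}, applied with $t$ replaced by $\ell$: for $X\sim\mu$ symmetric and $A_X^{(\ell)}:=\sum_{i=1}^{\ell}X_i$, we have
\[
\mathbb E_{X\sim\mu}\bigl[(T_{n,\ell}h)(|X|)\bigr]=\mathbb E\bigl[h(A_X^{(\ell)})\bigr],
\]
and the analogous identity holds for $Y\sim\nu$. This identity was derived for an arbitrary real-valued function on $\{0,\dots,t\}$ and any $t\le n$, so it applies verbatim to $h$ on $\{0,\dots,\ell\}$. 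The left-hand side of the lemma therefore equals
\[
\bigl|\mathbb E[h(A_X^{(\ell)})]-\mathbb E[h(A_Y^{(\ell)})]\bigr|.
\]

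Next I would bound this difference using the marginals on $S=\{1,\dots,\ell\}$. The quantity $h(A^{(\ell)})$ is a function of $X_S$ alone, namely $g(z):=h(|z|)$ for $z\in\{0,1\}^\ell$. Hence
\[
\bigl|\mathbb E_{\mu_S}[g]-\mathbb E_{\nu_S}[g]\bigr|=\Bigl|\sum_z g(z)\bigl(\mu_S(z)-\nu_S(z)\bigr)\Bigr|\le\|g\|_\infty\sum_z|\mu_S(z)-\nu_S(z)|=2\|h\|_\infty\,\Delta(\mu_S,\nu_S).
\]
No centering of $g$ is needed, since the factor $2$ in the target bound is exactly this $\ell_1$-to-total-variation factor.

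The remaining step, and the only place needing care, is to confirm that $\Delta(\mu_S,\nu_S)\le\delta$ when $|S|=\ell\le k$. The definition of $\Delta_k$ in the preliminaries takes a maximum over sets of size exactly $k$, while the introduction phrases the hypothesis as covering sizes at most $k$. Under either reading the claim follows. Fix any $S'\supseteq S$ with $|S'|=k$, which exists since $\ell\le k\le n$. Then $\mu_S$ and $\nu_S$ are images of $\mu_{S'}$ and $\nu_{S'}$ under the same projection map. Statistical distance does not increase under a common deterministic map, by the data-processing inequality or directly from the triangle inequality applied to the $\ell_1$ sum. Therefore $\Delta(\mu_S,\nu_S)\le\Delta(\mu_{S'},\nu_{S'})\le\delta$. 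Combining this with the previous display gives the bound $2\delta\|h\|_\infty$. The degenerate case $\ell=0$ is trivial, since then $T_{n,0}h$ is the constant $h(0)$ and the difference is $0$.
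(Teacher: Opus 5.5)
Your proposal is correct and follows essentially the same route as the paper: rewrite both expectations as $\mathbb E[h(A_X)]$ via the sampling identity with $t$ replaced by $\ell$, then bound the difference by $2\delta\|h\|_\infty$ using the marginals on $\ell$ coordinates. Your projection argument from a $k$-set $S'\supseteq S$ explicitly justifies $\Delta_\ell(\mu,\nu)\le\delta$. The paper leaves this step implicit, even though its preliminaries define $\Delta_k$ over sets of size exactly $k$.
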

A proof of this lemma appears in the Appendix.
We now combine Lemmas~\ref{lem:hahn-cutoff-decomposition} and~\ref{lem:approx-controls-low-degree} to prove Theorem~\ref{thm:mainmain}.
\begin{proof}[Proof of Theorem~\ref{thm:mainmain}]
Let
\[
    A_X:=\sum_{i=1}^t X_i,
    \qquad
    A_Y:=\sum_{i=1}^t Y_i.
\]
By symmetry and Equation~\ref{eq:stat_dis},
\[
    \Delta_t(\mu,\nu)
    =
    \frac12
    \sup_{\|f\|_\infty\le 1}
    \left|
        \mathbb E[f(A_X)]-\mathbb E[f(A_Y)]
    \right|,
\]
where the supremum is over functions $f:\{0,1,\dots,t\}\to[-1,1]$.
Fix such an \(f\).  For some $\ell\in\{s,s+1,...,k\}$ to be set later, apply Lemma~\ref{lem:hahn-cutoff-decomposition} to
obtain \(h_{f,\ell}\) of degree $\leq s$ satisfying the conditions of that lemma. From Equation~\ref{eq:sampling-operator-expectation} we have $\mathbb E[f(A_X)]=\mathbb E[(T_{n,t}f)(|X|)]$ and $\mathbb E[f(A_Y)]=\mathbb E[(T_{n,t}f)(|Y|)]$.
Thus,
\[
\begin{aligned}
    \mathbb E[f(A_X)]-\mathbb E[f(A_Y)]
    &=
    \mathbb E[(T_{n,t}f)(|X|)]
    -
    \mathbb E[(T_{n,t}f)(|Y|)]                                  \\
    &=
    \Big(
        \mathbb E[(T_{n,t}f-T_{n,\ell}h_{f,\ell})(|X|)]
        -
        \mathbb E[(T_{n,t}f-T_{n,\ell}h_{f,\ell})(|Y|)]
    \Big)                                                       \\
    &\qquad
    +
    \Big(
        \mathbb E[(T_{n,\ell}h_{f,\ell})(|X|)]
        -
        \mathbb E[(T_{n,\ell}h_{f,\ell})(|Y|)]
    \Big).
\end{aligned}
\]
The first term in the sum is bounded as follows:
\[
\begin{aligned}
    &
    \left|
        \mathbb E[(T_{n,t}f-T_{n,\ell}h_{f,\ell})(|X|)]
        -
        \mathbb E[(T_{n,t}f-T_{n,\ell}h_{f,\ell})(|Y|)]
    \right|                                                     \\
    &\qquad\le
    2\|T_{n,t}f-T_{n,\ell}h_{f,\ell}\|_\infty                                    \\
    &\qquad\le
    2\sqrt{n+1}\cdot\lambda_{n,t,s+1},
\end{aligned}
\]
where the last step uses Lemma~\ref{lem:hahn-cutoff-decomposition}.
For the second term, use Lemma~\ref{lem:approx-controls-low-degree}.  Since \(\ell\le k\),
\[
\begin{aligned}
    \left|
        \mathbb E[(T_{n,\ell}h_{f,\ell})(|X|)]-\mathbb E[(T_{n,\ell}h_{f,\ell})(|Y|)]
    \right|                                                  
    &\le
    2\delta\|h_{f,\ell}\|_\infty                                      \\
    &\le
    2\delta\,\sqrt{\ell+1}\cdot \lambda^{-1}_{t,\ell,s}.
\end{aligned}
\]
Combining the two estimates, taking the supremum over all \(\|f\|_\infty\le 1\), and dividing by two (matching the constant in Equation~\ref{eq:stat_dis}) gives that 
\begin{equation}
    \label{eq:main_with_lambda}
    \Delta_t(\mu,\nu)\leq \sqrt{n+1}\cdot\lambda_{n,t,s+1}+\delta\,\sqrt{\ell+1}\cdot \lambda^{-1}_{t,\ell,s},
\end{equation}
from which the proof follows by choosing $\ell=k$.
\end{proof}

We note that we are free to set $\ell$ arbitrarily up to $k$. Very modest gains could be made from optimising our selection.

Theorem~\ref{thm:main-exact-H-form}, as stated in the introduction, follows immediately by the definition of $\lambda_{n,t,s+1}$ and $\lambda_{t,k,s}$ as given by Equations~\ref{eq:hahn-normalized-norm} and~\ref{eq:lambda-def}.

\subsubsection{Proof of Corollary~\ref{cor:main-cor-delta}}
\label{s:subsubsectionCor}
In this section, we use upper and lower bounds on the scaling factors $\lambda_{n,t,r}$, which are proved in the Appendix.
\begin{lemma}
\label{lem:nonasymptotic-damping}
Let \(0\le r\le t<n\). Then
\begin{equation}
\label{eq:single-r-damping}
    \lambda_{n,t,r}
    \le
    \exp\!\left(
        -
        \frac{(n-t)r(r+1)}
             {2n(t+r+1)}
    \right).
\end{equation}
\end{lemma}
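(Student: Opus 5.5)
The plan is to compute $\lambda_{n,t,r}^2$ exactly as a finite product using the closed form \eqref{eq:hahn-normalized-norm}, bound each factor by an exponential, and sum the exponents. By \eqref{eq:hahn-normalized-norm} and \eqref{eq:lambda-def}, the factors $\frac{1}{2r+1}$ cancel, leaving
\[
    \lambda_{n,t,r}^2
    =
    \frac{H_{n,r}}{H_{t,r}}
    =
    \prod_{j=0}^{r-1}
    \frac{(n+j+2)(t-j)}{(n-j)(t+j+2)} .
\]
When $r=0$ the product is empty, $\lambda_{n,t,0}=1$, and the claimed bound holds with equality. So assume $1\le r\le t$. Then every $j\le r-1$ satisfies $j\le t-1$, and all four quantities $n+j+2$, $t-j$, $n-j$, $t+j+2$ are positive.

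The key step is to rewrite each factor as one minus a simple positive quantity. A direct expansion gives
\[
    (n-j)(t+j+2)-(n+j+2)(t-j)=(2j+2)(n-t).
\]
Hence each factor equals
\[
    1-\frac{(2j+2)(n-t)}{(n-j)(t+j+2)} .
\]
This lies in $(0,1]$, consistent with the monotonicity seen in Lemma~\ref{lem:lambda-monotone}. Using $1-x\le e^{-x}$, each factor is at most $\exp\!\bigl(-\frac{(2j+2)(n-t)}{(n-j)(t+j+2)}\bigr)$.

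Next I would make the denominators uniform. For $0\le j\le r-1$ we have $n-j\le n$ and $t+j+2\le t+r+1$. Therefore each exponent is at most $-\frac{(2j+2)(n-t)}{n(t+r+1)}$. Summing over $j$ and using $\sum_{j=0}^{r-1}(2j+2)=r(r+1)$ gives
\[
    \lambda_{n,t,r}^2\le \exp\!\left(-\frac{(n-t)r(r+1)}{n(t+r+1)}\right).
\]
Taking square roots yields \eqref{eq:single-r-damping}.

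The only real work is the algebraic identity for the numerator $(2j+2)(n-t)$; the rest is routine. I would double-check it by expanding both products term by term: the $nt$ and $j(j+2)$ terms cancel, leaving $n(2j+2)-t(2j+2)$.
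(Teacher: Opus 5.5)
Your proof is correct and follows essentially the same route as the paper: write $\lambda_{n,t,r}^2$ as a product of factors $1-\frac{2(n-t)(j+1)}{(n-j)(t+j+2)}$, bound each denominator by $n(t+r+1)$, apply $1-x\le e^{-x}$, and use $\sum_{j=0}^{r-1}(j+1)=r(r+1)/2$. The only difference is that you apply $1-x\le e^{-x}$ before bounding the denominator rather than after, which makes no difference.
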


\begin{lemma}
\label{lem:lambda-inverse-bound}
Let \(0\le r\le t< n\). Then
\[
    \lambda_{n,t,r}^{-1}
    \le
    \exp\!\left(
        \frac{(n-t)r(r+1)}
             {2n(t-r+1)}
    \right).
\]
\end{lemma}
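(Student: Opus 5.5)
The plan is to write $\lambda_{n,t,r}^{-2}$ as an explicit finite product using the closed form \eqref{eq:hahn-normalized-norm}, and then bound its logarithm term by term.

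First, I express the product. By \eqref{eq:lambda-def} and \eqref{eq:hahn-normalized-norm}, the factors $\frac{1}{2r+1}$ cancel, giving
\[
    \lambda_{n,t,r}^{-2}
    =
    \frac{H_{t,r}}{H_{n,r}}
    =
    \prod_{j=0}^{r-1}
    \frac{(t+j+2)(n-j)}{(t-j)(n+j+2)}.
\]
For $r=0$ the product is empty and the claim is trivial, so assume $1\le r\le t$. Then $t-j\ge t-r+1\ge 1$ for every $0\le j\le r-1$, so all factors are positive. Each factor can be written as $\frac{1+a_j}{1+b_j}$ with
\[
    a_j=\frac{2j+2}{t-j},
    \qquad
    b_j=\frac{2j+2}{n-j}.
\]
Since $t<n$, we have $0<b_j\le a_j$.

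Next, I bound each logarithm by the mean value theorem. On $[b_j,a_j]$ the derivative of $\log(1+x)$ is at most $1/(1+b_j)$, so
\[
    \log\frac{1+a_j}{1+b_j}\le \frac{a_j-b_j}{1+b_j}.
\]
Computing directly, $a_j-b_j=\frac{(2j+2)(n-t)}{(t-j)(n-j)}$ and $1+b_j=\frac{n+j+2}{n-j}$. The factor $n-j$ therefore cancels, leaving
\[
    \log\frac{1+a_j}{1+b_j}
    \le
    \frac{(2j+2)(n-t)}{(t-j)(n+j+2)}
    \le
    \frac{(2j+2)(n-t)}{n\,(t-r+1)},
\]
where the last step uses $t-j\ge t-r+1$ and $n+j+2\ge n$.

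Finally, I sum over $j$. Using $\sum_{j=0}^{r-1}(2j+2)=r(r+1)$ gives
\[
    \log \lambda_{n,t,r}^{-2}\le \frac{(n-t)r(r+1)}{n(t-r+1)}.
\]
Halving and exponentiating yields the claimed bound.

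The only delicate point is using the denominator $1+b_j$ in the mean-value bound rather than the cruder estimate $\log(1+a)-\log(1+b)\le a-b$. The crude estimate would leave a stray factor $n/(n-j)$ and fail to produce exactly the $n$ in the stated denominator. Everything else is routine algebra.
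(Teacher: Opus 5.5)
Your proof is correct and is essentially the paper's argument. Your mean-value bound $\log\frac{1+a_j}{1+b_j}\le\frac{a_j-b_j}{1+b_j}$ is exactly $\log(1+x)\le x$ applied to the paper's rewriting of each factor as $1+\frac{2(n-t)(j+1)}{(t-j)(n+j+2)}$, after which both arguments use $t-j\ge t-r+1$ and $n+j+2\ge n$ and sum over $j$.
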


From here Corollary~\ref{cor:main-cor-delta} has a very short proof:
\begin{proof} (of Corollary~\ref{cor:main-cor-delta})
    Start from the bound of Theorem~\ref{thm:mainmain}, and bound $\lambda_{n,t,s+1}$ with Lemma~\ref{lem:nonasymptotic-damping} setting $r=s+1$ and using the basic estimates: $(s+1)(s+2)\geq s^2$ and $2n(t+s+2)\leq 2n(2t+1)\leq 6nt$. To estimate $\lambda^{-1}_{t,k,s}$, apply Lemma~\ref{lem:lambda-inverse-bound} setting $n,t,r$ in the lemma to $t,k,s$, respectively. 
\end{proof}

\section{Upper bounds for the $\delta>0$ regime}
\label{s:applications}

We use Corollary~\ref{cor:main-cor-delta} to develop our upper bounds. We start with a general corollary and then consider two regimes: when $t$ is at most a fixed constant fraction of $n$ and when $t$ approaches $n$.

\begin{corollary}
\label{cor:finite-epsilon-criterion}
Let \(0<\varepsilon<1\), and let \(0\le k<t<n\). Define
\[
    s
    :=
    \left\lceil
        \sqrt{
            \frac{6nt}{n-t}\cdot
            \log\left(\frac{2\sqrt{n+1}}{\varepsilon}\right)
        }
    \right\rceil,
\]
and suppose $s\leq k$. Then, if \(\mu,\nu\) are symmetric \((k,\delta)\)-wise indistinguishable
distributions on \(\{0,1\}^n\), and
\[
    \delta
    \le
    \frac{\varepsilon}{2}
    \exp\!\left(
        -\frac12\log(k+1)
        -
        \frac{(t-k)s(s+1)}
             {2t(k-s+1)}
    \right),
\]
then $\Delta_t(\mu,\nu)\le \varepsilon$.
\end{corollary}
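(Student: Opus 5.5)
The corollary comes from plugging the stated choice of $s$ into Corollary~\ref{cor:main-cor-delta}. We then check that each of the two terms in \eqref{eq:working-approx-bound} is at most $\varepsilon/2$. The hypothesis $s\le k$ is what allows Corollary~\ref{cor:main-cor-delta} (equivalently Theorem~\ref{thm:mainmain}) to be applied with this $s$, since that result needs integers $0\le s\le k<t<n$. The integrality of $s$ comes from the ceiling.

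\emph{First term.} We need
\[
\exp\!\left(\tfrac12\log(n+1)-\tfrac{(n-t)s^2}{6nt}\right)\le \tfrac{\varepsilon}{2}.
\]
This is equivalent to
\[
\tfrac{(n-t)s^2}{6nt}\ge \tfrac12\log(n+1)+\log(2/\varepsilon)=\log\!\left(\tfrac{2\sqrt{n+1}}{\varepsilon}\right).
\]
That is, we need $s^2\ge \tfrac{6nt}{n-t}\log\!\left(\tfrac{2\sqrt{n+1}}{\varepsilon}\right)$. This holds because $s$ is the ceiling of the square root of the right-hand side, which is positive since $\varepsilon<1$ and $t<n$. Raising the lower bound on $s$ can only decrease this term, so the ceiling does no harm.

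\emph{Second term.} The hypothesis on $\delta$ is exactly
\[
\delta\exp\!\left(\tfrac12\log(k+1)+\tfrac{(t-k)s(s+1)}{2t(k-s+1)}\right)\le \tfrac{\varepsilon}{2},
\]
obtained by multiplying both sides of the assumed inequality by the positive exponential. The denominator $k-s+1$ is positive because $s\le k$.

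Adding the two terms gives $\Delta_t(\mu,\nu)\le\varepsilon$. There is no real obstacle here. The only points to verify are that the logarithm inside $s$ is positive and that $s\le k$ keeps both Corollary~\ref{cor:main-cor-delta} and the denominator $k-s+1$ valid.
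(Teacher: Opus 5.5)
Your proposal is correct and follows the paper's proof essentially line for line. You invoke Corollary~\ref{cor:main-cor-delta} (valid because $s\le k$), use the choice of $s$ to get $\frac{(n-t)s^2}{6nt}\ge\log\bigl(2\sqrt{n+1}/\varepsilon\bigr)$ so the first term is at most $\varepsilon/2$, and note that the hypothesis on $\delta$ is just a rearrangement of the second term being at most $\varepsilon/2$; your remarks on the integrality of $s$ and the positivity of $k-s+1$ are details the paper leaves implicit.
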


\begin{proof}
Under the premise that $s\leq k$, we may apply Corollary~\ref{cor:main-cor-delta} to obtain
\[
    \Delta_t(\mu,\nu)
    \le
    \exp\!\left(
        \frac12\log(n+1)
        -
        \frac{(n-t)s^2}{6nt}
    \right)
    +
    \delta
    \exp\!\left(
        \frac12\log(k+1)
        +
        \frac{(t-k)s(s+1)}
             {2t(k-s+1)}
    \right).
\]
From our setting of the parameter $s$,
\[
    \frac{(n-t)s^2}{6nt}
    \ge
    \log\left(\frac{2\sqrt{n+1}}{\varepsilon}\right),
\]
and therefore
\[
    \exp\!\left(
        \frac12\log(n+1)
        -
        \frac{(n-t)s^2}{6nt}
    \right)
    \le
    \frac{\varepsilon}{2}.
\]
The assumed upper bound on \(\delta\) gives
\[
    \delta
    \exp\!\left(
        \frac12\log(k+1)
        +
        \frac{(t-k)s(s+1)}
             {2t(k-s+1)}
    \right)
    \le
    \frac{\varepsilon}{2}.
\]
Combining the two estimates gives $\Delta_t(\mu,\nu)\le \varepsilon$.
\end{proof}

\subsection{Common parameter regimes}
\subsubsection{$t$ bounded away from $n$.}
Suppose \(t\le (1-c)n\) for some fixed \(0<c<1\). Then
\[
    s=O_c\!\left(\sqrt{t\log(n/\varepsilon)}\right).
\]
Consequently, if
\[
    k\ge C\sqrt{t\log(n/\varepsilon)}
\]
for a sufficiently large constant \(C=C(c)\), then \(s\le k\) and
\[
    \frac{(t-k)s(s+1)}
         {2t(k-s+1)}
    =
    O_c\!\left(
        \frac{(t-k)\log(n/\varepsilon)}{k}
    \right).
\]
It follows that there exists a constant \(K=K(c)>0\) such that it is
enough to assume
\[
    \delta
    \le
    \frac{\varepsilon}{2\sqrt{k+1}}
    \exp\!\left(
        -K
        \frac{(t-k)\log(n/\varepsilon)}{k}
    \right)
\]
to conclude that \(\Delta_t(\mu,\nu)\le\varepsilon\).

In particular, if $k=o(n)$, then for some $k=\omega\!\left(\sqrt{t\log n}\right)$ and $\delta=n^{-\omega(t/k)}$, then every pair of symmetric \((k,\delta)\)-wise indistinguishable distributions satisfies $\Delta_t(\mu,\nu)=n^{-\omega(1)}$. 
In the special case \(k=\Omega(t)\), if $t=\omega(\log n)$ and $\delta=n^{-\omega(1)}$, then $\Delta_t(\mu,\nu)=n^{-\omega(1)}$.

\subsubsection{$t$ approaching $n$.}
Suppose \(t=n-q\) where $q=o(n)$. Then $s=O\!\left(n\sqrt{\frac{\log(n/\varepsilon)}{q}}\right)$ and if
\[
    k
    \ge
    C \cdot n\sqrt{\frac{\log(n/\varepsilon)}{q}}
\]
for a sufficiently large constant \(C\),
\[
    \frac{(t-k)s(s+1)}
         {2t(k-s+1)}
    =
    O\!\left(
        \frac{(t-k)n^2\log(n/\varepsilon)}
             {tqk}
    \right).
\]
Consequently, it is enough to assume
\[
    \delta
    \le
    \frac{\varepsilon}{2\sqrt{k+1}}
    \exp\!\left(
        -O\!\left(
            \frac{(t-k)n^2\log(n/\varepsilon)}
                 {tqk}
        \right)
    \right)
\]
to conclude $\Delta_t(\mu,\nu)\le \varepsilon$.
A coarser sufficient condition via the lower bound on $k$ is
\[
    \delta
    \le
    \frac{\varepsilon}{2\sqrt{k+1}}
    \exp\!\left(
        -O\!\left(
            n\sqrt{\frac{\log(n/\varepsilon)}{q}}
        \right)
    \right).
\]

\section{A matching lower bound when $\delta=0$}
\label{s:mainLBsection}
Our goal in this section is to prove the following lower bound.
\begin{theorem}
\label{thm:exact-case-tightness}
Let \(0\le k<t\le n\). There exist symmetric \((k,0)\)-wise indistinguishable distributions
\(\mu,\nu\) on \(\{0,1\}^n\) such that
\[
    \Delta_t(\mu,\nu)
    \ge
    \frac{1}{\sqrt{t+1}}
    \lambda_{n,t,k+1}.
\]
\end{theorem}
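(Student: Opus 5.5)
Following the hint in the introduction, the construction puts all of the discrepancy on the Hahn component $\phi_{k+1}^{(n)}$. This is the least-damped component whose marginals vanish on $k$ coordinates.

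Let $\sigma:=\theta_{n,k+1}=\frac{1}{n+1}\phi^{(n)}_{k+1}$. This is a real signed measure on $\{0,\dots,n\}$. Its total mass is zero, by orthogonality to $\phi_0^{(n)}\equiv 1$ (note $k+1\ge 1$). Write $\sigma=c(\alpha-\beta)$, where $\alpha,\beta$ are the normalized positive and negative parts, and $c=\frac12\|\sigma\|_1=\frac{1}{2(n+1)}\sum_{j}|\phi^{(n)}_{k+1}(j)|>0$. Define $\mu,\nu$ as the symmetric distributions on $\{0,1\}^n$ whose Hamming-weight distributions are $\alpha$ and $\beta$.

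By Lemma~\ref{lem:hahn-measure-intertwining}, $M_{n\to k}\sigma=0$, since $k<k+1$. Because $M_{n\to k}\alpha$ and $M_{n\to k}\beta$ are the weight distributions of the $k$-marginals, these marginals agree. So $\mu,\nu$ are $(k,0)$-wise indistinguishable. The same lemma also gives $M_{n\to t}\sigma=\lambda_{n,t,k+1}\theta_{t,k+1}$, because $k+1\le t$.

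To lower bound $\Delta_t$, use \eqref{eq:stat_dis}. The $t$-marginal weight distributions differ by $c^{-1}\lambda_{n,t,k+1}\theta_{t,k+1}$. Taking $f=\operatorname{sign}(\phi^{(t)}_{k+1})$, the quantity $\Delta_t$ equals half the $\ell_1$ norm of this difference:
\[
\Delta_t(\mu,\nu)=\frac{\lambda_{n,t,k+1}}{2c}\cdot\frac{1}{t+1}\sum_{a=0}^t|\phi^{(t)}_{k+1}(a)|
=\lambda_{n,t,k+1}\cdot\frac{\frac{1}{t+1}\sum_a|\phi^{(t)}_{k+1}(a)|}{\frac{1}{n+1}\sum_j|\phi^{(n)}_{k+1}(j)|}.
\]
It remains to bound the ratio of average absolute values from below by $1/\sqrt{t+1}$.

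For the denominator, Cauchy--Schwarz and orthonormality give $\frac{1}{n+1}\sum_j|\phi^{(n)}_{k+1}(j)|\le\|\phi^{(n)}_{k+1}\|_n=1$.

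For the numerator, use $\sum_a|g(a)|\ge\|g\|_\infty$ together with $\|g\|_\infty^2\ge\frac1{t+1}\sum g^2$. With $g=\phi^{(t)}_{k+1}$ this gives $\max_a|g(a)|\ge\|g\|_t=1$, so $\frac1{t+1}\sum_a|g(a)|\ge\frac{1}{t+1}$. That only yields a $1/(t+1)$ bound, not $1/\sqrt{t+1}$, so this step needs a sharper estimate.

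The sharper estimate is $\|g\|_1\ge\|g\|_2^2/\|g\|_\infty$. In normalized form this reads $\frac{1}{t+1}\sum|g|\ge 1/\|g\|_\infty$. It then suffices to show $\|\phi^{(t)}_{k+1}\|_\infty\le\sqrt{t+1}$, and this follows from $\|g\|_\infty\le\sqrt{t+1}\|g\|_t$, exactly as in the proof of Lemma~\ref{lem:hahn-cutoff-decomposition}.

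Combining the numerator bound $\ge 1/\sqrt{t+1}$ with the denominator bound $\le 1$ gives $\Delta_t\ge\lambda_{n,t,k+1}/\sqrt{t+1}$.

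The main obstacle is this $\ell_1$-versus-$\ell_2$ comparison for Hahn polynomials, since the crude approach loses a full factor of $\sqrt{t+1}$. The $\ell_1\ge\ell_2^2/\ell_\infty$ trick recovers the stated bound.

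The case $t=n$ also fits: then $\lambda_{n,n,k+1}=1$, and the same argument applies with Lemma~\ref{lem:hahn-measure-intertwining} trivial. Finally, the ratio representation requires $c>0$, which holds because $\phi^{(n)}_{k+1}$ is a nonzero polynomial.
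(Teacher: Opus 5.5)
Your proof is correct and matches the paper's argument: the same normalized positive and negative parts of $\theta_{n,k+1}$, Lemma~\ref{lem:hahn-measure-intertwining} for both the $k$- and $t$-marginals, and Cauchy--Schwarz to bound the denominator by $1$. The only difference is cosmetic: your $\ell_1\ge\ell_2^2/\ell_\infty$ step, combined with $\|g\|_\infty\le\sqrt{t+1}\,\|g\|_t$, is just the unnormalized inequality $\sum_a|g(a)|\ge(\sum_a g(a)^2)^{1/2}=\sqrt{t+1}$, which the paper applies directly, so the ``obstacle'' you describe goes away once you compare $\ell_1$ and $\ell_2$ without normalizing.
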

In contrast to~\cite{bogdanov2019approximate} and~\cite{williamson2021sharp}, we give a direct proof that does not appeal to linear programming duality but instead parallels the upper bound's use of Hahn polynomials. The role of the Hahn polynomials in both the upper and lower bounds is
not accidental. In our construction, we take the positive and negative parts of the degree-\(r\) Hahn signed measure on \(\{0,\dots,n\}\) and observe that when $r>k$, marginalisation to \(k\) coordinates certifies $(k,0)$-wise indistinguishability whereas
marginalisation to \(t>r\) coordinates produces a related signed measure scaled by \(\lambda_{n,t,r}\). To maximise the statistical distance between the two distributions, we set $r=k+1$ so as to maximise \(\lambda_{n,t,r}\) while retaining $(k,0)$-wise indistinguishability.

\begin{proof}[Proof of Theorem~\ref{thm:exact-case-tightness}]
Let $r:=k+1$. From Equations~\ref{eq:hahn-orthogonality-unnormalized} and~\ref{eq:hahn-unnormalized-norm}, we have that $\theta_{n,r}$ is nonzero. Moreover, since \(r\ge 1\), the Hahn polynomial \(\phi_r^{(n)}\) is orthogonal to
the constant function. Therefore the signed measure
\(\theta_{n,r}\) as defined in Equation~\ref{eq:hahn-signed-measure} has total mass zero, in particular:
\[
\begin{aligned}
    \sum_{j=0}^n \theta_{n,r}(j)
    &=
    \frac{1}{n+1}
    \sum_{j=0}^n \phi_r^{(n)}(j) \\
    &=
    \left\langle
        1,\phi_r^{(n)}
    \right\rangle_n \\
    &=
    \left\langle
        Q_0^{(n)},\phi_r^{(n)}
    \right\rangle_n \\
    &=0.
\end{aligned}
\]

Define the positive and negative parts of \(\theta_{n,r}\) by
\[
    \theta_{n,r}^{+}(j)
    :=
    \max\{\theta_{n,r}(j),0\},
    \qquad
    \theta_{n,r}^{-}(j)
    :=
    \max\{-\theta_{n,r}(j),0\},
\]
and let
\[
    L
    :=
    \sum_{j=0}^n
    |\theta_{n,r}(j)|.
\]
Since \(\theta_{n,r}\) is nonzero and has total mass zero,
\[
    \sum_{j=0}^n \theta_{n,r}^{+}(j)
    =
    \sum_{j=0}^n \theta_{n,r}^{-}(j)
    =
    \frac{L}{2}.
\]
Consequently,
\[
    \alpha(j)
    :=
    \frac{2\theta_{n,r}^{+}(j)}{L},
    \qquad
    \beta(j)
    :=
    \frac{2\theta_{n,r}^{-}(j)}{L},
    \qquad 0\le j\le n,
\]
are probability measures on \(\{0,\dots,n\}\).

Their signed Hamming-weight difference is
\[
    \alpha-\beta
    =
    \frac{2}{L}\theta_{n,r}.
\]

Since \(r=k+1>k\), Lemma~\ref{lem:hahn-measure-intertwining} gives
\begin{equation}
\label{eq:margAB}
    \begin{aligned}
    M_{n\to k}(\alpha-\beta)
    &=
    \frac{2}{L}
    M_{n\to k}\theta_{n,r} \\
    &=0.
\end{aligned}
\end{equation}

Let \(\mu,\nu\) be the corresponding symmetric distributions on
\(\{0,1\}^n\). Explicitly, for \(x\in\{0,1\}^n\),
\[
    \mu(x)
    :=
    \frac{\alpha(|x|)}{\binom{n}{|x|}},
    \qquad
    \nu(x)
    :=
    \frac{\beta(|x|)}{\binom{n}{|x|}}.
\]

By Equation~\ref{eq:margAB}, the \(k\)-coordinate marginals of \(\mu\) and \(\nu\) have the
same Hamming-weight distribution. Since these marginals are symmetric,
they are identical. Hence \(\mu,\nu\) are
\((k,0)\)-wise indistinguishable.

On the other hand, \(r=k+1\le t\), so another application of
Lemma~\ref{lem:hahn-measure-intertwining} gives
\[
\begin{aligned}
    M_{n\to t}(\alpha-\beta)
    &=
    \frac{2}{L}
    M_{n\to t}\theta_{n,r} \\
    &=
    \frac{2\lambda_{n,t,r}}{L}
    \theta_{t,r}.
\end{aligned}
\]
The \(t\)-coordinate marginals are symmetric, so their statistical
distance equals the statistical distance between their Hamming-weight
distributions. Therefore
\begin{align}
    \Delta_t(\mu,\nu)
    &=
    \frac12
    \sum_{a=0}^t
    \left|
        \bigl(M_{n\to t}(\alpha-\beta)\bigr)(a)
    \right| \notag\\
    &=
    \frac{\lambda_{n,t,r}}{L}
    \sum_{a=0}^t
    |\theta_{t,r}(a)|.
    \label{eq:exact-lower-construction}
\end{align}

We now bound the numerator and denominator in
Equation~\eqref{eq:exact-lower-construction}. The fact that the 1-norm of a vector is upper bounded by the square root of its length times the 2-norm, and the
normalization of \(\phi_r^{(n)}\),
\[
L=\frac{1}{n+1}\sum_{j=0}^n|\phi_r^{(n)}(j)|\le
    \left(
        \frac{1}{n+1}
        \sum_{j=0}^n
        \bigl(\phi_r^{(n)}(j)\bigr)^2
    \right)^{1/2}=1.
\]
Conversely, using normalization and the fact that the 1-norm is at least the 2-norm,
\[
    \sum_{a=0}^t
    |\theta_{t,r}(a)|
    =
    \frac{1}{t+1}
    \sum_{a=0}^t
    |\phi_r^{(t)}(a)| 
    \ge
    \frac{1}{t+1}
    \left(
        \sum_{a=0}^t
        \bigl(\phi_r^{(t)}(a)\bigr)^2
    \right)^{1/2} 
    =
    \frac{1}{\sqrt{t+1}}.
\]
Substituting these estimates into
Equation~\eqref{eq:exact-lower-construction} gives $\Delta_t(\mu,\nu)\ge\frac{\lambda_{n,t,k+1}}{\sqrt{t+1}}$,
as required.
\end{proof}

When \(t<n\), we compare to Theorem~\ref{thm:mainmain} with
\(\delta=0\). By Lemma~\ref{lem:lambda-monotone}, the strongest possible upper bound is when $s=k$, and we obtain an upper bound of $\sqrt{n+1}\cdot\lambda_{n,t,k+1}$, matching Theorem~\ref{thm:exact-case-tightness} within a factor of $\sqrt{(n+1)(t+1)}$.

\section{An approximate to exact result}
\label{s:repairUBLB}
In this section, we prove the following result, which develops bounds on the distance between a pair of symmetric $(k,\delta)$-wise indistinguishable distributions over $\{0,1\}^n$ and such a pair of $(k,0)$-wise indistinguishable distributions. We provide nearly matching upper and lower bounds. For symmetric distributions \(\mu,\nu\) on \(\{0,1\}^n\), we define the ``repair distance''
\[
    \operatorname{Rep}_{n,k}(\mu,\nu)
    :=
    \inf_{\mu',\nu'}
    \max\{\Delta(\mu,\mu'),\Delta(\nu,\nu')\},
\]
where the infimum is over symmetric distributions \(\mu',\nu'\) on
\(\{0,1\}^n\) which are \((k,0)\)-wise indistinguishable.

\begin{theorem}[Repair upper bound]
\label{thm:symmetric-repair-upper-precise}
Let \(0\le k<n\), and let \(\mu,\nu\) be symmetric
\((k,\delta)\)-wise indistinguishable distributions on \(\{0,1\}^n\).
Then
\[
    \operatorname{Rep}_{n,k}(\mu,\nu)
    \le \frac{U}{1+U},
    \qquad
    \text{where }
    U=\frac{\delta\sqrt{k+1}}{\lambda_{n,k,k}}.
\]
\end{theorem}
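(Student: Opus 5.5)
Let $\sigma:=\alpha-\beta$ be the signed measure on $\{0,\dots,n\}$ formed by the Hamming-weight distributions $\alpha,\beta$ of $\mu,\nu$. The plan has three steps: split $\sigma$ in the Hahn measure basis into a low part (degrees $\le k$) and a high part (degrees $>k$); show that $(k,\delta)$-wise indistinguishability forces the low part to have small $\ell_1$ mass; then absorb the low part into $\alpha$ and $\beta$ and renormalise, producing a $(k,0)$-wise indistinguishable pair. Statistical distance between symmetric distributions equals the distance between their Hamming-weight distributions, so everything can be done on $\{0,\dots,n\}$.

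\emph{Step 1 (Hahn expansion).} The functions $\phi_r^{(n)}$ form an orthonormal basis, so the measures $\theta_{n,r}$ form a basis of signed measures. Write
\[
\sigma=\sum_{r=0}^n c_r\theta_{n,r},\qquad \sigma_{\mathrm{low}}:=\sum_{r=0}^{k}c_r\theta_{n,r},\qquad \sigma_{\mathrm{high}}:=\sigma-\sigma_{\mathrm{low}}.
\]
We have $c_0=0$, because $\sigma$ has total mass zero and $\theta_{n,r}$ has total mass zero for $r\ge1$. Hence both $\sigma_{\mathrm{low}}$ and $\sigma_{\mathrm{high}}$ have total mass zero. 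By Lemma~\ref{lem:hahn-measure-intertwining},
\[
M_{n\to k}\sigma_{\mathrm{high}}=0,\qquad M_{n\to k}\sigma=\sum_{r\le k}c_r\lambda_{n,k,r}\theta_{k,r}=\frac{g}{k+1},\qquad g:=\sum_{r\le k}d_r\phi_r^{(k)},\quad d_r:=c_r\lambda_{n,k,r}.
\]

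\emph{Step 2 (main estimate: controlling $\sigma_{\mathrm{low}}$).} The hypothesis gives $\frac1{k+1}\sum_a|g(a)|=\|M_{n\to k}\sigma\|_1\le 2\delta$. This $\ell_1$ bound must be converted into an $\ell_2$ bound on $g$; this conversion is where the $\sqrt{k+1}$ factor is lost and is the step I expect to be the real obstacle. Bounding $\|g\|_\infty\le\sum_a|g(a)|$ gives
\[
\|g\|_k^2=\frac{1}{k+1}\sum_a g(a)^2\le \frac{\|g\|_\infty}{k+1}\sum_a|g(a)|\le (k+1)(2\delta)^2,
\]
so $\sum_r d_r^2\le 4\delta^2(k+1)$. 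By Lemma~\ref{lem:lambda-monotone}, $\lambda_{n,k,r}\ge\lambda_{n,k,k}$ for $r\le k$, so $\sum_{r\le k}c_r^2\le \lambda_{n,k,k}^{-2}\sum_r d_r^2$. Then Cauchy--Schwarz, as in the proof of Theorem~\ref{thm:exact-case-tightness}, gives
\[
\|\sigma_{\mathrm{low}}\|_1\le\Bigl\|\sum_{r\le k}c_r\phi_r^{(n)}\Bigr\|_n=\Bigl(\sum_{r\le k}c_r^2\Bigr)^{1/2}\le 2U.
\]

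\emph{Step 3 (repair).} Let $m:=\frac12\|\sigma_{\mathrm{low}}\|_1\le U$. Since $\sigma_{\mathrm{low}}$ has mass zero, its positive and negative parts $\sigma_{\mathrm{low}}^{\pm}$ each have mass $m$. Define
\[
\alpha':=\frac{\alpha+\sigma_{\mathrm{low}}^-}{1+m},\qquad \beta':=\frac{\beta+\sigma_{\mathrm{low}}^+}{1+m}.
\]
These are probability measures, and $\alpha'-\beta'=\sigma_{\mathrm{high}}/(1+m)$. Hence $M_{n\to k}(\alpha'-\beta')=0$, and the corresponding symmetric $\mu',\nu'$ are $(k,0)$-wise indistinguishable, exactly as argued in the proof of Theorem~\ref{thm:exact-case-tightness}. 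Finally,
\[
\Delta(\alpha,\alpha')=\frac{\|m\alpha-\sigma_{\mathrm{low}}^-\|_1}{2(1+m)}\le\frac{m}{1+m}\le\frac{U}{1+U},
\]
using that $x\mapsto x/(1+x)$ is increasing, and the same bound holds for $\beta$.
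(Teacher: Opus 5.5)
Your proposal is correct and follows essentially the same route as the paper. Your $\sigma_{\mathrm{low}}$ is exactly the paper's correction $\zeta$: the paper builds $\zeta$ as the Hahn lift $\sum_{r\le k}(c_r/\lambda_{n,k,r})\theta_{n,r}$ of $y=M_{n\to k}\sigma$, and this coincides with the degree-$\le k$ projection of $\sigma$. The $\ell_1\to\ell_2$ conversion and the Cauchy--Schwarz mass bound are the same, as is the renormalised repair $(\overline{\mu}+\zeta^-)/(1+m)$. The differences are cosmetic: you get mass zero from $c_0=0$ rather than from mass preservation under $M_{n\to k}$, and you keep $m\le U$ separate from the statement's $U$, whereas the paper reuses the symbol $U$ for both.
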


\begin{theorem}[Repair lower bound]
\label{thm:symmetric-repair-lower}
Let \(1\le k<n\) and for some sufficiently small constant $c$, set
\[
    \delta_0
    :=
    \min\left\{
        1,\,
        \lambda_{n,k,k}\cdot\frac{c}{k^{1/4}}
    \right\}.
\]
Then for every \(0<\delta\le \delta_0\), there exist symmetric
distributions \(\mu,\nu\) on \(\{0,1\}^n\) satisfying $\Delta_k(\mu,\nu)=\delta$
and hence \((k,\delta)\)-wise indistinguishability, such that
\[
    \operatorname{Rep}_{n,k}(\mu,\nu)
    \ge
    \frac{\delta}
         {2\lambda_{n,k,k}\sqrt{n+1}}.
\]
\end{theorem}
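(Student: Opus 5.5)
The plan is to perturb a fixed base measure by the degree-$k$ Hahn signed measure $\theta_{n,k}$, which is the lowest-degree Hahn component that the $k$-wise marginals can see. Let $L:=\sum_j|\theta_{n,k}(j)|$, which is nonzero and at most $1$ (the Cauchy--Schwarz argument in the proof of Theorem~\ref{thm:exact-case-tightness}). Take the base Hamming-weight distribution $\pi:=|\theta_{n,k}|/L$. For $\eta>0$ set
\[
\alpha:=\pi+\eta\,\theta_{n,k},\qquad \beta:=\pi-\eta\,\theta_{n,k}.
\]
Since $k\ge1$, $\theta_{n,k}$ has total mass zero, so $\alpha,\beta$ have mass $1$. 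They are nonnegative exactly when $\eta\le 1/L$. Let $\mu,\nu$ be the symmetric lifts of $\alpha,\beta$, as in the proof of Theorem~\ref{thm:exact-case-tightness}.

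By Lemma~\ref{lem:hahn-measure-intertwining}, $M_{n\to k}(\alpha-\beta)=2\eta\lambda_{n,k,k}\theta_{k,k}$. Symmetric marginals reduce to Hamming weights, so
\[
\Delta_k(\mu,\nu)=\eta\,\lambda_{n,k,k}\,S,\qquad S:=\sum_{a=0}^k|\theta_{k,k}(a)|.
\]
I choose $\eta:=\delta/(\lambda_{n,k,k}S)$, giving $\Delta_k(\mu,\nu)=\delta$ exactly. The positivity constraint $\eta\le1/L$ then becomes $\delta\le\lambda_{n,k,k}S/L$. Because $L\le1$, it suffices that $\delta\le\lambda_{n,k,k}S$, so I need the lower bound
\[
S=\frac{1}{k+1}\sum_{a=0}^k|\phi^{(k)}_k(a)|\;\ge\;\frac{c}{k^{1/4}}.
\]
This is the main obstacle. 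The crude $\ell_1\ge\ell_2$ bound only gives $S\ge 1/\sqrt{k+1}$, which would yield $k^{-1/2}$ instead of $k^{-1/4}$. To beat it, I would first try the explicit form of the top-degree polynomial on $\{0,\dots,k\}$. From \eqref{eq:uniform-hahn-definition}, or the ${}_3F_2$ evaluation, $Q_k^{(k)}(x)$ should reduce to $(-1)^x$ times a ratio of binomial coefficients. Comparing $\sum_a|\phi_k^{(k)}(a)|$ with $\|\phi_k^{(k)}\|_k=1$ via Stirling should then give an $\ell_1/\ell_2$ ratio of order $k^{1/4}$. Equivalently, it amounts to showing $\phi_k^{(k)}$ is spread over $\asymp k^{1/2}$ of scale $k^{1/4}$ but not too peaked.

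For the repair bound, the witness is the test $g:=\phi_k^{(n)}$ on Hamming weights. Let $\mu',\nu'$ be any symmetric $(k,0)$-wise indistinguishable pair, with weight distributions $\alpha',\beta'$, so $M_{n\to k}(\alpha'-\beta')=0$. Since $\phi_k^{(n)}=\lambda_{n,k,k}^{-1}T_{n,k}\phi_k^{(k)}$ by Lemma~\ref{lem:hahn-intertwining}, adjointness \eqref{eq:T-M-adjoint-measures} gives
\[
\sum_j g(j)(\alpha'-\beta')(j)=\lambda_{n,k,k}^{-1}\sum_a\phi_k^{(k)}(a)\bigl(M_{n\to k}(\alpha'-\beta')\bigr)(a)=0.
\]
On the other hand, orthonormality gives $\sum_j g(j)(\alpha-\beta)(j)=2\eta\langle\phi_k^{(n)},\phi_k^{(n)}\rangle_n=2\eta$.

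Hence $2\eta\le\|g\|_\infty\bigl(\|\alpha-\alpha'\|_1+\|\beta-\beta'\|_1\bigr)\le 4\|g\|_\infty\max\{\Delta(\mu,\mu'),\Delta(\nu,\nu')\}$. Here I use that distances between symmetric distributions equal distances between their weight distributions. Using $\|g\|_\infty\le\sqrt{n+1}\,\|g\|_n=\sqrt{n+1}$, I get
\[
\operatorname{Rep}_{n,k}(\mu,\nu)\ge\frac{\eta}{2\sqrt{n+1}}=\frac{\delta}{2\lambda_{n,k,k}S\sqrt{n+1}}\ge\frac{\delta}{2\lambda_{n,k,k}\sqrt{n+1}},
\]
where the last step uses $S\le1$ (Cauchy--Schwarz again). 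Taking the infimum over $\mu',\nu'$ completes the argument. The cap $\delta_0\le1$ needs no further attention.
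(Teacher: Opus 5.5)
Your proposal is correct and is essentially the paper's proof: the same base measure $|\theta_{n,k}|/L_n$ perturbed by $\pm$ a multiple of $\theta_{n,k}$, the same positivity condition $\delta\le\lambda_{n,k,k}L_k/L_n$, and the same witness $\phi_k^{(n)}$ combined with adjointness and $\|\phi_k^{(n)}\|_\infty\le\sqrt{n+1}$. The step you flag as the main obstacle is exactly Lemma~\ref{lem:l1-ratio-simple-lower}, which rests on the identity $Q_k^{(k)}(x)=(-1)^x\binom{k}{x}$, giving $S=2^k/\sqrt{(k+1)\binom{2k}{k}}=\Theta(k^{-1/4})$ by Stirling; your ${}_3F_2$ route to that identity works (the series collapses to ${}_2F_1(-x,k+1;1;1)$ and Chu--Vandermonde applies), whereas the paper instead observes that $(-1)^j\binom{k}{j}$ is orthogonal to all polynomials of degree $<k$.
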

Theorems~\ref{thm:symmetric-repair-upper-precise}
and~\ref{thm:symmetric-repair-lower} show that the dependence on
\(\lambda_{n,k,k}^{-1}\) is necessary up to polynomial factors. For convenience, we note that applying Stirling's formula to Equations~\ref{eq:hahn-normalized-norm} and~\ref{eq:lambda-def} gives 
\[
    \lambda_{n,k,k}^{-1}
    =
    n^{O(1)}2^{\,k-n\cdot I(k/n)},
\]
from which the precise statement in Theorem~\ref{thm:symmetric-repair-upper} follows. 

\subsection{Notation and preliminaries}
For a signed measure \(\eta\) on a finite set \(\Omega\), we write
\[
    \eta(\Omega)
    :=
    \sum_{x\in\Omega}\eta(x)
\]
for its total mass, and
\[
    |\eta|(\Omega)
    :=
    \sum_{x\in\Omega}|\eta(x)|
\]
for its total variation mass. Thus, if \(\alpha,\beta\) are probability
measures on \(\Omega\), then
\[
    \Delta(\alpha,\beta)
    =
    \frac12|\alpha-\beta|(\Omega)
    =
    \frac12\sum_{x\in\Omega}|\alpha(x)-\beta(x)|.
\]
For a symmetric distribution \(\mu\) on \(\{0,1\}^n\), write
\[
    \overline{\mu}(j)
    :=
    \Pr_{X\sim\mu}[|X|=j],
    \qquad 0\le j\le n,
\]
as the distribution over Hamming weights induced by $\mu$.

We note that the marginalisation operator preserves total mass. Indeed, for any
signed measure \(\eta\) on \(\{0,\dots,n\}\),
\[
\begin{aligned}
    (M_{n\to t}\eta)(\{0,\dots,t\})
    &=
    \sum_{a=0}^t
    (M_{n\to t}\eta)(a)                                      \\
    &=
    \sum_{a=0}^t
    \sum_{j=0}^n
    K_{n,t}(j,a)\eta(j)                                      \\
    &=
    \sum_{j=0}^n
    \left(
        \sum_{a=0}^t K_{n,t}(j,a)
    \right)\eta(j)                                           \\
    &=
    \sum_{j=0}^n \eta(j)                                     \\
    &=
    \eta(\{0,\dots,n\}).
\end{aligned}
\]
Here the penultimate equality holds because, for each fixed
\(j\), the values $K_{n,t}(j,0),\dots,K_{n,t}(j,t)$ form a probability distribution.

\subsection{Proof of Theorem~\ref{thm:symmetric-repair-upper}}
\label{subsec:symmetric-approx-to-exact}    
\begin{proof}
Let $\sigma:=\overline{\mu}-\overline{\nu}$
be the signed difference of the Hamming-weight distributions and let $y:=M_{n\to k}\sigma$. We have that $\sigma$ is a difference of distributions and thus $\sigma(\{0,...,n\})=\sum_{j=0}^n \sigma(j)=0$. Since \(\mu,\nu\) are \((k,\delta)\)-wise indistinguishable and symmetric,
we have
\[
    |y|(\{0,...,k\})=\sum_{a=0}^k|y(a)|\le 2\delta.
\]

Regard \(y\) as a signed measure on \(\{0,\dots,k\}\), and let \(\widetilde{y}\) be the scaling 
\[
    \widetilde{y}(a):=(k+1)y(a),
    \qquad 0\le a\le k.
\]
Expand \(\widetilde{y}\) in the orthonormal Hahn basis on \(\{0,\dots,k\}\):
\[
    \widetilde{y}=\sum_{r=0}^k c_r\phi_r^{(k)}.
\]
The coefficients are
\[
    c_r
    =
    \langle \widetilde{y},\phi_r^{(k)}\rangle_k
    =
    \sum_{a=0}^k y(a)\phi_r^{(k)}(a).
\]
We now construct a signed correction \(\zeta\) on \(\{0,\dots,n\}\) such
that $M_{n\to k}\zeta=y$. To this end, define a function \(\widetilde{\zeta}:\{0,\dots,n\}\to\mathbb R\) by
\[
    \widetilde{\zeta}
    :=
    \sum_{r=0}^k
    \frac{c_r}{\lambda_{n,k,r}}\phi_r^{(n)}.
\]
Let \(\zeta\) be the signed measure on \(\{0,\dots,n\}\) defined by the scaling:
\[
    \zeta(j):=\frac{1}{n+1}\widetilde{\zeta}(j),
    \qquad 0\le j\le n.
\]
It follows immediately from the linearity of the marginalisation operator and Lemma~\ref{lem:hahn-measure-intertwining} that $M_{n\to k}\zeta=y$.

We next bound the total variation mass $|\zeta|(\{0,...,n\})$ of the correction \(\zeta\). By
orthonormality and Lemma~\ref{lem:lambda-monotone},
\[
    \|\widetilde{\zeta}\|_n^2
    =
    \sum_{r=0}^k
    \frac{c_r^2}{\lambda_{n,k,r}^2}\leq \frac{\|\widetilde{y}\|^2_k}{\lambda^2_{n,k,k}}
\]
Now,
\[
\begin{aligned}
    \|\widetilde{y}\|_k
    &=
    \left(
        \frac1{k+1}
        \sum_{a=0}^k ((k+1)y(a))^2
    \right)^{1/2}                                               \\
    &=
    \sqrt{k+1}
    \left(
        \sum_{a=0}^k y^2(a)
    \right)^{1/2}                                               \\
    &\le
    \sqrt{k+1}\sum_{a=0}^k|y(a)|                                          \\
    &\le
    2\delta\sqrt{k+1}.
\end{aligned}
\]
Therefore
\[
    \|\widetilde{\zeta}\|_n
    \le
    \frac{2\delta\sqrt{k+1}}{\lambda_{n,k,k}}.
\]
We then have that
\[
    |\zeta|(\{0,...,n\})= \frac{1}{n+1}\sum_{j=0}^n|\widetilde{\zeta}(j)|\leq\frac{1}{\sqrt{n+1}}\left(\sum_{j=0}^n\widetilde{\zeta}^2(j)\right)^{1/2}=\|\widetilde{\zeta}\|_n\leq\frac{2\delta\sqrt{k+1}}{\lambda_{n,k,k}}.
\]
Write the Jordan decomposition of $\zeta$ into positive and negative parts:
\[
    \zeta=\zeta^+-\zeta^-
\]
define $U:=\frac12|\zeta|(\{0,...,n\})$ and observe that 
\begin{equation}
\label{eq:aBound}
    \zeta^+(\{0,...,n\})=\zeta^-(\{0,...,n\})=U\leq \frac{\delta\sqrt{k+1}}{\lambda_{n,k,k}},
\end{equation}
where the first equality is equivalent to the claim that $\zeta(\{0,...,n\})=0$, which is justified via preservation of total mass under the marginalisation map and the fact that $\sigma$ is a difference of probability distributions:
\[
    \zeta(\{0,\dots,n\})
    =
    (M_{n\to k}\zeta)(\{0,\dots,k\})
    =
    y(\{0,\dots,k\})
    =
    (M_{n\to k}\sigma)(\{0,\dots,k\})
    =
    \sigma(\{0,\dots,n\})
    =
    0.
\]
Define new distributions as
\[
    \overline{\mu}' :=
    \frac{\overline{\mu}+\zeta^-}{1+U},
    \qquad
    \overline{\nu}' :=
    \frac{\overline{\nu}+\zeta^+}{1+U}.
\]
These are probability measures on \(\{0,\dots,n\}\), from which we can define \(\mu',\nu'\) as the corresponding symmetric distributions on \(\{0,1\}^n\).

Their signed Hamming-weight difference is
\[
    \overline{\mu}'-\overline{\nu}'
    =
    \frac{\overline{\mu}-\overline{\nu}-\zeta}{1+U}
    =
    \frac{\sigma-\zeta}{1+U}.
\]
Applying \(M_{n\to k}\), we obtain
\[
    M_{n\to k}(\overline{\mu}'-\overline{\nu}')
    =
    \frac{M_{n\to k}\sigma-M_{n\to k}\zeta}{1+U}
    =
    \frac{y-y}{1+U}
    =
    0.
\]
Thus \(\mu',\nu'\) are symmetric and have identical \(k\)-coordinate marginals. 

It remains to bound the distance from \(\mu,\nu\) to \(\mu',\nu'\).
For \(\mu\), we have

\[
\begin{aligned}
    \Delta(\overline{\mu}',\overline{\mu})
    &=
    \frac12
    \sum_{j=0}^n
    \left|
        \frac{\overline{\mu}(j)+\zeta^-(j)}{1+U}
        -
        \overline{\mu}(j)
    \right|                                                    \\[1ex]
    &=
    \frac12
    \sum_{j=0}^n
    \left|
        \frac{\overline{\mu}(j)+\zeta^-(j)
        -(1+U)\overline{\mu}(j)}{1+U}
    \right|                                                    \\[1ex]
    &=
    \frac{1}{2(1+U)}
    \sum_{j=0}^n
    \left|
        \zeta^-(j)-U\overline{\mu}(j)
    \right|                                                    \\[1ex]
    &\le
    \frac{1}{2(1+U)}
    \sum_{j=0}^n
    \left(
        |\zeta^-(j)|+U|\overline{\mu}(j)|
    \right)                                                    \\[1ex]
    &=
    \frac{1}{2(1+U)}
    \left(
        \zeta^-(\{0,\dots,n\})
        +
        U\,\overline{\mu}(\{0,\dots,n\})
    \right)                                                    \\[1ex]
    &=
    \frac{1}{2(1+U)}(U+U)                                      \\[1ex]
    &=
    \frac{U}{1+U}.
\end{aligned}
\]
Therefore $\Delta(\overline{\mu},\overline{\mu}')\le\frac{U}{1+U}$ and the same argument gives $\Delta(\overline{\nu},\overline{\nu}')\le\frac{U}{1+U}$.

By symmetry, $\Delta(\mu,\mu')=\Delta(\overline{\mu},\overline{\mu}')$ and $\Delta(\nu,\nu')=\Delta(\overline{\nu},\overline{\nu}')$. Hence $\Delta(\mu,\mu')$ and $\Delta(\nu,\nu')$ are each at most $\frac{U}{1+U}$, and the final inequality follows from Equation~\ref{eq:aBound}.
\end{proof}

\subsection{Proof of Theorem~\ref{thm:symmetric-repair-lower}}
For convenience, we define
\[
    L_i=|\theta_{i,k}|(\{0,\dots,i\}).
\]
We will need a lemma relating the ratio of $L_k$ to $L_n$, which is proven in the Appendix.
\begin{lemma}
\label{lem:l1-ratio-simple-lower}
Let \(1\le k\le n\). Then $\frac{L_k}{L_n}\ge\frac{2^k}{\sqrt{(k+1)\binom{2k}{k}}}=\Theta(k^{-1/4})$.
\end{lemma}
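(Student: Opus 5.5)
The plan is to bound the numerator and the denominator of $L_k/L_n$ separately: an upper bound on $L_n$ via Cauchy--Schwarz, as in the proof of Theorem~\ref{thm:exact-case-tightness}, and an exact evaluation of $L_k$, using the fact that the top-degree Hahn polynomial on $\{0,\dots,k\}$ has an explicit closed form.

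\emph{Step 1 (denominator).} By definition
\[
L_n=\frac{1}{n+1}\sum_{j=0}^n\bigl|\phi_k^{(n)}(j)\bigr|.
\]
Exactly as for $L$ in the proof of Theorem~\ref{thm:exact-case-tightness}, Cauchy--Schwarz together with $\|\phi_k^{(n)}\|_n=1$ gives $L_n\le 1$. It therefore suffices to show
\[
L_k\ge \frac{2^k}{\sqrt{(k+1)\binom{2k}{k}}}.
\]
In fact I expect this to hold with equality.

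\emph{Step 2 (identify $\phi_k^{(k)}$).} On the $(k+1)$-point set $\{0,\dots,k\}$, the function $\phi_k^{(k)}$ is, up to a scalar, the unique function that is $\langle\cdot,\cdot\rangle_k$-orthogonal to every polynomial of degree $<k$. This holds because $\phi_0^{(k)},\dots,\phi_{k-1}^{(k)}$ span exactly the polynomials of degree $<k$: they have degrees $0,\dots,k-1$, by \eqref{eq:uniform-hahn-definition}. The function $g(a):=(-1)^a\binom{k}{a}$ has this orthogonality property, since $\sum_a(-1)^a\binom ka p(a)$ is, up to sign, the $k$-th forward difference of $p$ at $0$, which vanishes whenever $\deg p<k$. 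Hence $\phi_k^{(k)}=c\,g$ for some real scalar $c$.

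\emph{Step 3 (normalize and compute).} The normalization $\|\phi_k^{(k)}\|_k=1$ together with Vandermonde's identity $\sum_a\binom ka^2=\binom{2k}{k}$ gives
\[
c^2=\frac{k+1}{\binom{2k}{k}}.
\]
Since $\sum_a\binom ka=2^k$, we obtain
\[
L_k=\frac{|c|\,2^k}{k+1}=\frac{2^k}{\sqrt{(k+1)\binom{2k}{k}}}.
\]
Combined with Step 1, this proves the inequality.

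\emph{Step 4 (asymptotics).} For the $\Theta(k^{-1/4})$ claim, use the central binomial estimate $\binom{2k}{k}=\Theta(4^k/\sqrt k)$. This gives
\[
\frac{2^k}{\sqrt{(k+1)\binom{2k}{k}}}=\Theta\!\left(\frac{k^{1/4}}{\sqrt{k+1}}\right)=\Theta(k^{-1/4}).
\]

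The only step needing care is Step 2: the identification of $\phi_k^{(k)}$ with the alternating binomial function, specifically the claim that orthogonality to all polynomials of degree $<k$ pins the function down up to scale. That claim is pure linear algebra, a codimension-one subspace of a $(k+1)$-dimensional space. The sign of $c$ is irrelevant, because only absolute values enter $L_k$.
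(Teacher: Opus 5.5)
Your proposal is correct and follows essentially the same route as the paper: you bound $L_n\le 1$ by Cauchy--Schwarz, identify $\phi_k^{(k)}$ as a multiple of $(-1)^j\binom{k}{j}$ via the $k$-th finite-difference identity and orthogonality, and then evaluate $L_k$ exactly. The only cosmetic difference is in fixing the scalar: the paper uses $Q_k^{(k)}(0)=1$ together with the closed form $H_{k,k}=\binom{2k}{k}/(k+1)$, while you normalize directly with $\|\phi_k^{(k)}\|_k=1$ and Vandermonde's identity, which gives the same value.
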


We now prove Theorem~\ref{thm:symmetric-repair-lower}. Our technique is to start with a single distribution and then perturb it in two different ways so that closeness of $k$-wise marginals is preserved (within distance $\delta$) and so that as much global discrepancy as possible is introduced. Our choice of perturbation is a properly scaled and parametrised signed Hahn measure. The base distribution that we perturb is the absolute value of the selected Hahn measure and is chosen to ensure that each distribution in our pair is nonnegative.

\begin{proof}(of Theorem~\ref{thm:symmetric-repair-lower})
Define the signed measure
\[
    \theta:=\lambda_{n,k,k}^{-1}\theta_{n,k},
\]
from which we have
\[
    |\theta|(\{0,\dots,n\})=\lambda_{n,k,k}^{-1}L_n.
\]
Since \(k\ge 1\), the function \(\phi_k^{(n)}\) is orthogonal to the
constant function. Hence $\theta(\{0,\dots,n\})=0$.

Let
\[
    \gamma:=\frac{2\delta}{L_k}.
\]
By the assumption \(\delta\le\delta_0\) and Lemma~\ref{lem:l1-ratio-simple-lower}, we have
\[
    \gamma
    \le
    \frac{2\lambda_{n,k,k}}{L_n}
    =
    \frac{2}{|\theta|(\{0,\dots,n\})}.
\]

Define a probability measure \(\rho\) on \(\{0,\dots,n\}\) by
\[
    \rho(j):=\frac{|\theta(j)|}{|\theta|(\{0,\dots,n\})}.
\]
Now define probability measures \(\overline{\mu},\overline{\nu}\) on
\(\{0,\dots,n\}\) by
\[
    \overline{\mu}:=\rho+\frac{\gamma}{2}\theta,
    \qquad
    \overline{\nu}:=\rho-\frac{\gamma}{2}\theta.
\]
These measures have total mass \(1\), because \(\rho\) has total mass
\(1\) and \(\theta\) has total mass \(0\). They are nonnegative because,
for every \(j\),
\[
    \rho(j)-\frac{\gamma}{2}|\theta(j)|
    =
    |\theta(j)|
    \left(
        \frac{1}{|\theta|(\{0,\dots,n\})}
        -
        \frac{\gamma}{2}
    \right)
    \ge 0.
\]
Let \(\mu,\nu\) be the symmetric distributions on
\(\{0,1\}^n\) corresponding to $\overline{\mu}$ and $\overline{\nu}$.

Their signed Hamming-weight difference is
\[
    \sigma:=\overline{\mu}-\overline{\nu}
    =
    \gamma\theta.
\]
It follows immediately from the linearity of the marginalisation operator and Lemma~\ref{lem:hahn-measure-intertwining} that $M_{n\to k}\theta=\theta_{k,k}$ and therefore $M_{n\to k}\sigma=\gamma\theta_{k,k}$.

Hence
\[
    \Delta_k(\mu,\nu)=
    \frac12 |M_{n\to k}\sigma|(\{0,\dots,k\})=
    \frac{\gamma}{2}|\theta_{k,k}|(\{0,\dots,k\}) =
    \frac{\gamma}{2}L_k =
    \delta.
\]
Thus \(\mu,\nu\) are \((k,\delta)\)-wise indistinguishable.

We now show that any repair to obtain exact \(k\)-wise indistinguishability requires global adjustment of at least the claimed amount. Let \(\mu',\nu'\) be any symmetric
\((k,0)\)-wise indistinguishable distributions on \(\{0,1\}^n\), and let
\(\overline{\mu}',\overline{\nu}'\) be their Hamming-weight laws. Define
\[
    \sigma':=\overline{\mu}'-\overline{\nu}',
    \qquad
    \eta:=\sigma-\sigma'.
\]
Since \(\mu',\nu'\) are exactly \(k\)-wise indistinguishable, $M_{n\to k}\sigma'=0$ and therefore $M_{n\to k}\eta=M_{n\to k}\sigma=\gamma\theta_{k,k}$.

We lower-bound the total variation mass of \(\eta\). Starting with the fact $1=\sum_{a=0}^k \phi_k^{(k)}(a)\theta_{k,k}(a)$ and using the
adjoint relationship~\eqref{eq:T-M-adjoint-measures}, gives
\[
\begin{aligned}
    \gamma
    &=
    \sum_{a=0}^k \phi_k^{(k)}(a)\,\gamma\theta_{k,k}(a)              \\
    &=
    \sum_{a=0}^k \phi_k^{(k)}(a)(M_{n\to k}\eta)(a)              \\
    &=
    \sum_{j=0}^n (T_{n,k}\phi_k^{(k)})(j)\eta(j)                 \\
    &=
    \lambda_{n,k,k}
    \sum_{j=0}^n \phi_k^{(n)}(j)\eta(j).
\end{aligned}
\]
Thus
\[
    \left|
        \sum_{j=0}^n \phi_k^{(n)}(j)\eta(j)
    \right|
    =
    \frac{\gamma}{\lambda_{n,k,k}}.
\]
Using $\|\phi_k^{(n)}\|_\infty\le \sqrt{n+1}$,
we obtain
\[
    \frac{\gamma}{\lambda_{n,k,k}}
    \le
    \|\phi_k^{(n)}\|_\infty\,|\eta|(\{0,\dots,n\})
    \le
    \sqrt{n+1}\,|\eta|(\{0,\dots,n\}).
\]
Therefore
\begin{equation}
\label{eq:eta-mass-lower-bound}
    |\eta|(\{0,\dots,n\})
    \ge
    \frac{\gamma}{\lambda_{n,k,k}\sqrt{n+1}}.
\end{equation}
Since all distributions are symmetric, statistical distance between the
full distributions equals statistical distance between their
Hamming-weight laws. Moreover,
\[
    \eta
    =
    (\overline{\mu}-\overline{\mu}')
    -
    (\overline{\nu}-\overline{\nu}').
\]
Hence
\[
\begin{aligned}
    |\eta|(\{0,\dots,n\})
    &\le
    |\overline{\mu}-\overline{\mu}'|(\{0,\dots,n\})
    +
    |\overline{\nu}-\overline{\nu}'|(\{0,\dots,n\})              \\
    &=
    2\Delta(\mu,\mu')
    +
    2\Delta(\nu,\nu')                                           \\
    &\le
    4\max\{\Delta(\mu,\mu'),\Delta(\nu,\nu')\}.
\end{aligned}
\]
Combining this with~\eqref{eq:eta-mass-lower-bound}, we get
\[
    \max\{\Delta(\mu,\mu'),\Delta(\nu,\nu')\}
    \ge
    \frac{\gamma}{4\lambda_{n,k,k}\sqrt{n+1}}
    =
    \frac{\delta}{2\lambda_{n,k,k}\sqrt{n+1}L_k}.
\]
Finally, since
\[
    L_k
    =
    |\theta_{k,k}|(\{0,\dots,k\})
    =
    \frac{1}{k+1}\sum_{a=0}^k |\phi_k^{(k)}(a)|
    \le
    \left(
        \frac{1}{k+1}\sum_{a=0}^k (\phi_k^{(k)}(a))^2
    \right)^{1/2}
    =
    1,
\]
we conclude that
\[
    d
    \ge
    \frac{\delta}{2\lambda_{n,k,k}\sqrt{n+1}}.
\]
This holds for every symmetric exactly \(k\)-wise indistinguishable pair
\(\mu',\nu'\). Taking the infimum over all such pairs gives
\[
    \operatorname{Rep}_{n,k}(\mu,\nu)
    \ge
    \frac{\delta}{2\lambda_{n,k,k}\sqrt{n+1}}.
\]
\end{proof}

\bibliographystyle{ACM-Reference-Format}
\bibliography{hahn_abi_final}

\appendix
\section{Technical lemmas}
We record here proofs of five technical lemmas.
\subsection{Proof of Lemma~\ref{lem:lambda-monotone}}
\begin{proof}
Because $\lambda_{n,t,r}$ is always positive, it suffices to show that $\lambda_{n,t,r+1}^{2}<\lambda_{n,t,r}^{2}$. Immediately from Equations~\ref{eq:hahn-normalized-norm} and~\ref{eq:lambda-def} we have that
\[
    \lambda_{n,t,r}^{2}
    =
    \prod_{j=0}^{r-1}
    \frac{(t-j)(n+j+2)}{(n-j)(t+j+2)},
\]
where the empty product is interpreted as \(1\) and as such we have that 
\[
    \frac{\lambda_{n,t,r+1}^{2}}{\lambda_{n,t,r}^{2}}=\frac{(t-r)(n+r+2)}{(n-r)(t+r+2)}<1,
\]
where the final inequality follows from
\[
\begin{aligned}
    &(n-r)(t+r+2) - (t-r)(n+r+2)  \\
    &\qquad =
    2(n-t)(r+1)>0.
\end{aligned}
\]
\end{proof}

\subsection{Proof of Lemma~\ref{lem:approx-controls-low-degree}}

\begin{proof}
Let
\[
    A_X:=\sum_{i=1}^{\ell} X_i,
    \qquad
    A_Y:=\sum_{i=1}^{\ell} Y_i.
\]
Since \(\mu\) and \(\nu\) are symmetric, the definition of \(T_{n,\ell}\) and the law of total
expectation yield
\[
    \mathbb{E}[h(A_X)]
    =
    \mathbb{E}\bigl[(T_{n,\ell}h)(\lvert X\rvert)\bigr]
    \qquad\text{and}\qquad
    \mathbb{E}[h(A_Y)]
    =
    \mathbb{E}\bigl[(T_{n,\ell}h)(\lvert Y\rvert)\bigr].
\]
Since \(\ell\le k\) and \(\mu,\nu\) are \((k,\delta)\)-wise
indistinguishable, by Equation~\ref{eq:stat_dis},
\[
\begin{aligned}
    \left|
        \mathbb E_{X\sim\mu}[(T_{n,\ell}h)(|X|)]
        -
        \mathbb E_{Y\sim\nu}[(T_{n,\ell}h)(|Y|)]
    \right|
    &=
    \left|
        \mathbb E[h(A_X)]
        -
        \mathbb E[h(A_Y)]
    \right| \\
    &\le
    2\delta\|h\|_\infty.
\end{aligned}
\]
\end{proof}

\subsection{Proof of Lemma~\ref{lem:nonasymptotic-damping}}
\begin{proof}
The case $r=0$ is immediate; we assume $r\geq 1$. Then, for \(0\le j< r\leq t<n\), we have
\[
    \frac{(t-j)(n+j+2)}{(n-j)(t+j+2)}
    =
    1-\varepsilon_j,
\]
where
\[
    \varepsilon_j
    :=
    \frac{2(n-t)(j+1)}
         {(n-j)(t+j+2)}.
\]
We also have directly from Equation~\ref{eq:lambda-def} that $\prod_{j=0}^{r-1}(1-\varepsilon_j)=\lambda^2_{n,t,r}$.
The value $\varepsilon_j$ is clearly non-negative, and is also at most 1 because $n-j\geq n-t$ and $t+j+2\geq2j+2$.
It is trivial that  
\[
    \varepsilon_j
    \ge
    \frac{2(n-t)(j+1)}
         {n(t+r+1)},
\]
and it follows that
\[
\begin{aligned}
    \frac{(t-j)(n+j+2)}{(n-j)(t+j+2)}
    &=
    1-\varepsilon_j                                                        \\
    &\le
    1-
    \frac{2(n-t)(j+1)}
         {n(t+r+1)}                                                    \\
    &\le
    \exp\!\left(
        -
        \frac{2(n-t)(j+1)}
             {n(t+r+1)}
    \right),
\end{aligned}
\]
where we used \(1-x\le e^{-x}\).

Multiplying this estimate over \(j=0,\dots,r-1\), we obtain
\[
\begin{aligned}
    \lambda_{n,t,r}^{2}
    &\le
    \exp\!\left(
        -
        \frac{2(n-t)}
             {n(t+r+1)}
        \sum_{j=0}^{r-1}(j+1)
    \right)                                                        \\
    &=
    \exp\!\left(
        -
        \frac{2(n-t)}
             {n(t+r+1)}
        \cdot
        \frac{r(r+1)}2
    \right)                                                        \\
    &=
    \exp\!\left(
        -
        \frac{(n-t)r(r+1)}
             {n(t+r+1)}
    \right).
\end{aligned}
\]
Taking square roots gives \eqref{eq:single-r-damping}.
\end{proof}

\subsection{Proof of Lemma~\ref{lem:lambda-inverse-bound}}
\begin{proof}
The case \(r=0\) is immediate, so assume \(r\ge 1\). Then, for $0<r\leq t<n$, we have from Equations~\ref{eq:hahn-normalized-norm} and~\ref{eq:lambda-def} we get
\[
    \lambda_{n,t,r}^{-2}=
    \prod_{j=0}^{r-1}
    \frac{(t+j+2)(n-j)}
         {(t-j)(n+j+2)}.
\]

For each \(0\le j\le r-1\), write the \(j\)-th factor as
\[
    \frac{(t+j+2)(n-j)}
         {(t-j)(n+j+2)}
    =
    1+
    \frac{
        (t+j+2)(n-j)-(t-j)(n+j+2)
    }
    {(t-j)(n+j+2)}
    =
    1+
    \frac{2(n-t)(j+1)}{(t-j)(n+j+2)}.
\]
Since \(j\le r-1\), we have $t-j\ge t-r+1$ and therefore
\[
    \frac{2(n-t)(j+1)}
         {(t-j)(n+j+2)}
    \le
    \frac{2(n-t)(j+1)}
         {n(t-r+1)}.
\]
Using \(\log(1+x)\le x\), we obtain
\[
\begin{aligned}
    \log \lambda_{n,t,r}^{-1}
    &=
    \frac12
    \sum_{j=0}^{r-1}
    \log\!\left(
        1+
        \frac{2(n-t)(j+1)}
             {(t-j)(n+j+2)}
    \right)                                                     \\
    &\le
    \frac12
    \sum_{j=0}^{r-1}
    \frac{2(n-t)(j+1)}
         {n(t-r+1)}                                         \\
    &=
    \frac{(n-t)}
         {n(t-r+1)}
    \sum_{j=0}^{r-1}(j+1)                                      \\
    &=
    \frac{(n-t)r(r+1)}
         {2n(t-r+1)}.
\end{aligned}
\]
Exponentiating gives
\[
    \lambda_{n,t,r}^{-1}
    \le
    \exp\!\left(
        \frac{(n-t)r(r+1)}
             {2n(t-r+1)}
    \right).
\]
\end{proof}

\subsection{Proof of Lemma~\ref{lem:l1-ratio-simple-lower}}
\begin{proof}
First,
\[
    L_n
    =
    \frac1{n+1}\sum_{j=0}^n |\phi_k^{(n)}(j)|
    \le
    \left(
        \frac1{n+1}\sum_{j=0}^n (\phi_k^{(n)}(j))^2
    \right)^{1/2}
    =
    1,
\]
where the last equality reflects the normalization of \(\phi_k^{(n)}\).

It remains to compute a lower bound for \(L_k\). By the standard finite-difference identity (cf. Section 5.3 of~\cite{GrahamKnuthPatashnik1989}), we have
\[
    \sum_{j=0}^k (-1)^j\binom{k}{j}p(j)=0,
\]
for any polynomial $p$ of degree at most $k-1$. By orthogonality of the Hahn polynomials $Q_r^{(k)}$ over $\{0,...,k\}$, it follows that $Q_k^{(k)}$ agrees with $v(j):=(-1)^j\binom{k}{j}$ up to scaling by a constant. By the fact that $v(0)=1=Q_k^{(k)}(0)$, we have for \(\{0,\dots,k\}\):
\[
    Q_k^{(k)}(j)=(-1)^j\binom{k}{j}.
\]
Therefore
\[
    \sum_{j=0}^k |Q_k^{(k)}(j)|=2^k.
\]
From Equations~\ref{eq:hahn-normalized-norm} and~\ref{eq:orthonormal-hahn} we have
\[
    \phi_k^{(k)}
    =
    \frac{Q_k^{(k)}}{\sqrt{H_{k,k}}},
    \qquad
    H_{k,k}
    =
    \frac{1}{k+1}\binom{2k}{k},
\]
and we get
\[
\begin{aligned}
    L_k
    &=
    \frac1{k+1}\sum_{j=0}^k |\phi_k^{(k)}(j)|  \\
    &=
    \frac1{k+1}\frac{2^k}{\sqrt{H_{k,k}}}      \\
    &=
    \frac{2^k}{\sqrt{(k+1)\binom{2k}{k}}}.
\end{aligned}
\]
Thus
\[
    \frac{L_k}{L_n}\ge L_k
    =
    \frac{2^k}{\sqrt{(k+1)\binom{2k}{k}}}=\Theta(k^{-1/4})
\]
\end{proof}

\end{document}